\newtheorem{theorem}{Theorem}
\newtheorem{lemma}[theorem]{Lemma}
\newtheorem{corollary}[theorem]{Corollary}
\title{Geometric decoding of subspace codes with explicit Schubert calculus applied to spread codes}
\author{Klara Stokes}
\begin{document}
\maketitle

\begin{abstract} 
This article is about a decoding algorithm for error-correcting subspace codes. 
A version of this algorithm was previously described by Rosenthal, Silberstein and Trautmann \cite{Rosenthal2}.
The decoding algorithm requires the code to be defined as the intersection of the Pl\"ucker embedding of the Grassmannian and an algebraic variety. 
We call such codes \emph{geometric subspace codes}.   
Complexity is substantially improved compared to \cite{Rosenthal2} and connections to finite geometry are given. 
The decoding algorithm is applied to Desarguesian spread codes, which are known to be defined as the intersection of the Pl\"ucker embedding of the Grassmannian with a linear space. 
\end{abstract}

\section{Introduction}
Consider the vector space $V=V(n+1,\mathbb{F})$ of dimension $n+1$ over a field $\mathbb{F}$. 
The Grassmannian $G_{\mathbb{F}}(k+1,n+1)$ is the set of subspaces of dimension $k+1$ of $V$. 
The projective geometry of dimension $n$ over $\mathbb{F}$ is the collection of Grassmannians $PG(n,\mathbb{F})=\{G_{\mathbb{F}}(0,n+1),\dots,G_{\mathbb{F}}(n+1,n+1)\}$. 
In particular, a subspace of $V(n+1,\mathbb{F})$ of dimension $k+1$ corresponds to a projective subspace of dimension $k$ of the projective geometry $PG(n,\mathbb{F})$. If $U$ is a vector subspace of $V(n+1,\mathbb{F})$ of dimension $k+1$, then we  denote by $\mathbb{P}(U)$ the projective subspace of dimension $k$ associated to $U$, and we say that $\mathbb{P}(U)$ is the projectivization of $U$. 
For example, the projectivization of the proper subspaces of $V(4,\mathbb{F})$ of dimension $1$, $2$ and $3$ are the points, the lines and the planes of $PG(3,\mathbb{F})$ respectively. 

A \emph{subspace code} in $V$ is a set of subspaces of $V$, therefore a set of projective subspaces of $PG(n,\mathbb{F})$. 
If all subspaces of the code have the same dimension $k+1$, then the code is contained in the Grassmannian $G_{\mathbb{F}}(k+1,n+1)$. Such codes are sometimes called constant-dimension codes or Grassmannian codes. 
Subspace codes are used for example in network coding \cite{subspacecodes,networkcoding}. 
Error-correction for subspace codes in network coding was introduced in~\cite{errorcorrection}.  
Just as in classical coding theory, error-correction in subspace codes requires the codewords to be taken well-separated according to some distance. 
A natural distance between two subspaces $A$ and $B$ in a vector space $V$ is $d(A,B)=\dim(A+B)-\dim(A\cap B)=\dim(A)+\dim(B)-2\dim(A\cap B)$. 
Illustrating with an example, for this distance, in $PG(3,\mathbb{F})$ the maximal possible distance between two codewords is 4, and this bound is attained by a set of lines with pairwise empty intersection. 
It is a non-trivial fact that it is possible to partition the set of points in  $PG(3,\mathbb{F})$  with a set of non-intersecting lines. 
Such a set of lines is called a \emph{spread} of lines in $PG(3,\mathbb{F})$. 

In general, a $t$-spread in $PG(n,\mathbb{F})$ is defined as a set of subspaces of projective dimension $t$ that partitions the point set, so a spread is simply a 1-spread. 
There is a $t$-spread in $PG(n,\mathbb{F})$ if and only if $(t+1)|(n+1)$, see \cite{Andre}. 
Spreads were first used as subspace codes for network coding in \cite{MaGoRo}. 
There are several decoding algorithms for spread codes, see \cite{MaGoRo, spreadcodes2,trautmann} and \cite{errorcorrection, Silva}. Partial spreads have also been used \cite{gorla}. 

Schubert calculus was proposed for error-correction of subspace codes in \cite{Rosenthal}, and applied to give an algorithm for list-decoding using explicit Schubert calculus in \cite{Rosenthal2}.  
In this article we use explicit Schubert calculus to decode geometric subspace codes, that is, subspace codes which are algebraic varieties in the Pl\"ucker embedding of the Grassmannian. We give substantial improvements to the methods described in \cite{Rosenthal, Rosenthal2,trautmann} and show that the complexity of this algorithm can be reduced  to the order of complexity of solving a system of linear equations in $\bigwedge^{k+1}V(n+1,\mathbb{F})$. In \cite{Rosenthal2}, the algorithms had the order of complexity corresponding to solving a system of quadratic equations in $\bigwedge^{k+1}V(n+1,\mathbb{F})$, which is at least exponential. 
We also show that for codes correcting one error the algorithm does not require passing to Pl\"ucker coordinates, so in that case the complexity of the algorithm has the order of complexity of solving a system of linear equations in $V(n+1,\mathbb{F})$. 
Finally, we apply the algorithm to the important case when the subspace code is a Desarguesian $t$-spread. 

Section \ref{section:2} gives a detailed description of the algorithm for a Desarguesian line spread code in $PG(3,\mathbb{F})$. This code corrects one error.  
The description of this algorithm is very thorough and should require less background in algebra, compared to the subsequent sections.  
In Section \ref{section:3} we give the details required for explicit Schubert calculus for decoding, define three distinct versions of the decoding algorithm for geometric subspace codes and calculate their complexity. 
In Section \ref{section:4} we apply the decoding algorithm to $t$-spreads in $PG(2t+1,\mathbb{F})$.

\section{A geometric decoder for a Desarguesian spread of lines in $PG(3,\mathbb{F})$}
\label{section:2}
The Grassmannian of lines in $PG(3,\mathbb{F})$ is the smallest interesting Grassmannian. It was first discovered by Pl\"ucker that a line in this Grassmannian can be given coordinates. 

Let $(a_0:a_1:a_2:a_3)$ and $(b_0:b_1:b_2:b_3)$ be the projective coordinates of two distinct points on a line $\ell$. The primary Pl\"ucker coordinates of $\ell$ are then $$\begin{array}{l}\left(q_0:q_1:q_2:q_3:q_4:q_5\right)=\\\\\left(\left|\begin{array}{cc}a_0&a_1\\b_0&b_1\end{array}\right|:\left|\begin{array}{cc}a_0&a_2\\b_0&b_2\end{array}\right|:\left|\begin{array}{cc}a_0&a_3\\b_0&b_3\end{array}\right|:\left|\begin{array}{cc}a_1&a_2\\b_1&b_2\end{array}\right|:\left|\begin{array}{cc}a_1&a_3\\b_1&b_3\end{array}\right|:\left|\begin{array}{cc}a_2&a_3\\b_2&b_3\end{array}\right|\right)\end{array}.$$ The primary Pl\"ucker coordinates of a line represent a point in $PG(5,\mathbb{F})$. 

Let $a_0^*X_0+a_1^*X_1+a_2^*X_2+a_3^*X_3=0$ and $b_0^*X_0+b_1^*X_1+b_2^*X_2+b_3^*X_3=0$ be the projective equations of two distinct planes intersecting in a line $\ell$.  
The dual Pl\"ucker coordinates of $\ell$ are then 
$$\begin{array}{l}
\left(q_0^*:q_1^*:q_2^*:q_3^*:q_4^*:q_5^*\right)=\\\\\left(\left|\begin{array}{cc}a_0^*&a_1^*\\b_0^*&b_1^*\end{array}\right|:\left|\begin{array}{cc}a_0^*&a_2^*\\b_0^*&b_2^*\end{array}\right|:\left|\begin{array}{cc}a_0^*&a_3^*\\b_0^*&b_3^*\end{array}\right|:\left|\begin{array}{cc}a_1^*&a_2^*\\b_1^*&b_2^*\end{array}\right|:\left|\begin{array}{cc}a_1^*&a_3^*\\b_1^*&b_3^*\end{array}\right|:\left|\begin{array}{cc}a_2^*&a_3^*\\b_2^*&b_3^*\end{array}\right|\right)\end{array}.$$ 
The dual Pl\"ucker coordinates of a line also represent a point in $PG(5,\mathbb{F})$. 

\begin{lemma}
\label{lemma:1}
\cite{Hodge} The primary and the dual Pl\"ucker coordinates of a given line $\ell$ are related by $\left(q_0:q_1:q_2:q_3:q_4:q_5\right)=\left(q_5^*:-q_4^*:q_3^*:q_2^*:-q_1^*:q_0^*\right)$.  
\end{lemma}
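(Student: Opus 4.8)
The plan is to translate the statement into exterior algebra and then recognise the dual Pl\"ucker point as a Hodge-type dual of the primary one. Write $V=V(4,\mathbb{F})$ with standard basis $e_0,\dots,e_3$ and dual basis $e_0^*,\dots,e_3^*$. The primary Pl\"ucker coordinates are exactly the coordinates of the decomposable bivector $a\wedge b\in\bigwedge^2 V$ in the ordered basis $\{e_i\wedge e_j\}_{i<j}$, and the dual Pl\"ucker coordinates are the coordinates of $a^*\wedge b^*\in\bigwedge^2 V^*$ in $\{e_i^*\wedge e_j^*\}_{i<j}$. The geometric hypotheses say precisely that $\langle a,b\rangle=U$ is the $2$-dimensional space defining $\ell$, while $a^*,b^*$ span its annihilator $U^\circ=\{f\in V^*:f|_U=0\}$; both bivectors are therefore nonzero and determined up to scalar, which is all that a projective identity requires.

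The key step is to realise $a^*\wedge b^*$ as a contraction of the standard volume form $\mu=e_0^*\wedge e_1^*\wedge e_2^*\wedge e_3^*\in\bigwedge^4 V^*$. I would set $\beta:=\iota_a\iota_b\mu\in\bigwedge^2 V^*$ and argue that $\beta$ is a nonzero scalar multiple of $a^*\wedge b^*$. Because $\mu$ is alternating, inserting $a$ or $b$ a second time annihilates it, so $\iota_a\beta=\iota_b\beta=0$; that is, both $a$ and $b$ lie in the radical of the alternating form $\beta$. Hence $\beta$ has rank at most $2$, and since $a,b$ are linearly independent it is nonzero, so its rank is exactly $2$ and its radical is exactly the $2$-dimensional space $U$. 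A rank-$2$ alternating form is determined up to scalar by its radical (it descends to a nondegenerate form on the $2$-dimensional quotient $V/U$), and $a^*\wedge b^*$ has radical $\ker a^*\cap\ker b^*=U$; therefore $\beta$ and $a^*\wedge b^*$ are proportional. Structurally this is nothing but the statement that dualising a line is the Hodge star on $\bigwedge^2$ of a $4$-dimensional space, which interchanges each basis bivector with its complement.

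It then remains to compute $\beta=\iota_a\iota_b\mu$ in coordinates and read off the signs. Expanding the two successive interior products and collecting terms on each $e_i^*\wedge e_j^*$ produces, up to the overall scalar, the coefficient $q_{kl}$ indexed by the complementary pair $\{k,l\}=\{0,1,2,3\}\setminus\{i,j\}$, each carried with the sign of the permutation $(i,j,k,l)$. The main obstacle is therefore purely the sign bookkeeping: one must verify that these permutation signs are exactly $+,-,+,+,-,+$ along the ordered basis, so that $(q_0^*:\dots:q_5^*)=(-q_5:q_4:-q_3:-q_2:q_1:-q_0)$, and then inverting this relation (equivalently, rescaling projectively by $-1$) yields precisely $(q_0:\dots:q_5)=(q_5^*:-q_4^*:q_3^*:q_2^*:-q_1^*:q_0^*)$. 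Since all signs are $\pm 1$, the identity and its proof are valid over any field, including characteristic $2$, where the minus signs are simply redundant. This recovers the classical computation of \cite{Hodge}.
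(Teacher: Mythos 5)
Your proof is correct, but it cannot be compared to a proof in the paper because the paper offers none: Lemma~\ref{lemma:1} is stated with a citation to Hodge and Pedoe and is never proved there (the same is true of its higher-dimensional analogue, Lemma~\ref{lemma:2a}). What you supply is therefore a genuine, self-contained argument rather than a rederivation of the paper's. Your route --- identifying the dual coordinates with the bivector $a^*\wedge b^*\in\bigwedge^2 V^*$, realising it up to scalar as the contraction $\iota_a\iota_b\mu$ of the volume form, and pinning down proportionality via the radical of a rank-$2$ alternating form --- is sound: the radical argument correctly forces $\beta\propto a^*\wedge b^*$ (the radical contains $U$, rank is even and nonzero, hence exactly $2$, and alternating forms on the $2$-dimensional quotient $V/U$ form a one-dimensional space), and it is valid over every field, including characteristic $2$. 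The classical proof in Hodge--Pedoe is instead a direct computation with complementary minors of the $2\times 4$ matrices of points and planes, using $A(A^*)^{T}=0$; your approach buys a conceptual explanation (the duality is the Hodge star on $\bigwedge^2$ of a $4$-space) and isolates all sign bookkeeping in a single expansion, at the cost of requiring the exterior-algebra formalism. One small presentational wrinkle you should fix: you state that the permutation signs along the ordered basis are $+,-,+,+,-,+$, but then write $(q_0^*:\dots:q_5^*)=(-q_5:q_4:-q_3:-q_2:q_1:-q_0)$, which carries the opposite overall sign; the two differ only by a global factor $-1$ (equivalently, by the choice $\iota_a\iota_b$ versus $\iota_b\iota_a$), so projectively nothing is wrong, but the ``so that'' as written is not literally consistent. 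After inversion your relation does reproduce exactly the displayed identity $\left(q_0:q_1:q_2:q_3:q_4:q_5\right)=\left(q_5^*:-q_4^*:q_3^*:q_2^*:-q_1^*:q_0^*\right)$, as one can also confirm on a test line such as $\langle (1,0,0,0),(0,1,1,0)\rangle$, whose primary coordinates are $(1:1:0:0:0:0)$ and whose dual coordinates are $(0:0:0:0:1:-1)$.
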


The image of the Grassmannian $G_{\mathbb{F}}(2,4)$ under either one of the primary and the dual Pl\"ucker coordinates is an algebraic variety in $PG(5,q)$ defined by the equation $x_0x_5-x_1x_4+x_2x_3=0$. 
This is a hyperbolic quadric  $Q=Q^+(5,\mathbb{F})$, in this context known as the Klein quadric. It contains points, lines and planes, which represent $PG(3,\mathbb{F})$ in a complete incidence preserving correspondence, which is described in what follows. 

The planes contained in $Q$ can be partitioned into two equivalence classes $A$ and $\Omega$ defined by the relation $\pi\sim \tilde{\pi}$ if $\pi\cap\tilde{\pi}$ is a point or if $\pi=\tilde{\pi}$. 
The Pl\"ucker coordinates of the set of lines through a point $p$ in $PG(3,\mathbb{F})$ are the points of a plane contained in $Q$, and each of the planes in the equivalence class of planes $A$ corresponds to a point in $PG(3,\mathbb{F})$ in this way. 
Also, the Pl\"ucker coordinates of the set of lines contained in a plane $P$ in $PG(3,\mathbb{F})$ are the points of a plane, and each of the planes in the equivalence class of planes $\Omega$ corresponds to a plane in $PG(3,\mathbb{F})$. 
This correspondence is known as the Klein correspondence. 

A spread is called Desarguesian if the translation plane it defines is Desarguesian, that is, if the Theorem of Desargues is valid in that plane \cite{Hirschfeld}. This fact justifies the name Desarguesian spread. A spread is Desarguesian if and only if it is isomorphic to a spread constructed with Segre's construction \cite{Segre}. For other characteristics of Desarguesian spreads, see for example \cite{Beutelspacher}. 
It is well-known that a Desarguesian spread of lines in $PG(3,\mathbb{F})$ corresponds in the Pl\"ucker embedding to the points of the complete intersection of the Klein quadric $Q$ and a projective space $U$ of three dimensions. For the purpose of this article, this construction could as well have been taken as the defining property of Desarguesian spreads.  
Below we provide a short detailed description of the correspondence between the lines in a spread of lines and the points in the intersection $U\cap Q$. 

For any line $\ell$ which does not intersect $Q$, consider the polar three-dimensional space $U=\ell^{\perp}$ of $\ell$ with respect to the quadric $Q$. (The points of $U$ are the points with vector representatives that are perpendicular to the vector representatives of all the points on $\ell$ under the bilinear form defining $Q$.)
Then the intersection of $U$ and $Q$ is a non-singular elliptic quadric in three dimensions, and any elliptic quadric contained in $Q$ can be obtained in this way. 
In particular, $U\cap Q$ does not contain lines. 
It is easy to see that this implies that the lines it represents in $PG(3,\mathbb{F})$ do not intersect and that they cover all the points in $PG(3,\mathbb{F})$. 
Indeed, two lines in $PG(3,\mathbb{F})$ that intersect in a point have, by the Klein correspondence, Pl\"ucker coordinates that are collinear on a line contained in $Q$. But the intersection $U\cap Q$ does not contain lines, so $U\cap Q$ does not contain Pl\"ucker coordinates of two lines with non-empty intersection. 
To see that each point of $PG(3,\mathbb{F})$ is on at least one of these lines, note that in the Klein correspondence each point in $PG(3,\mathbb{F})$ is represented by a plane contained in $Q$, and this plane must intersect the three-space $U$ in at least a point. Together this shows that the set of lines with Pl\"ucker coordinates in $U\cap Q$ is a spread of $PG(3,\mathbb{F})$. 

\begin{lemma}
\label{lemma:2}
If $S$ is a spread of lines in $PG(3,\mathbb{F})$ with primary Pl\"ucker coordinates in the complete intersection $U\cap Q$ of the Klein quadric $Q$ with a three-dimensional projective subspace $U$ of $PG(5,\mathbb{F})$, then the dual Pl\"ucker coordinates of $S$ are also in $U\cap Q$. 
\end{lemma}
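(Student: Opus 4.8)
The plan is to reduce the statement to a single invariance property of the subspace $U$. By Lemma~\ref{lemma:1}, passing from the primary to the dual Pl\"ucker coordinates of a line is effected by the fixed linear involution $\tau:(q_0:q_1:q_2:q_3:q_4:q_5)\mapsto(q_5:-q_4:q_3:q_2:-q_1:q_0)$ of $PG(5,\mathbb{F})$. Consequently the set of dual Pl\"ucker coordinates of the lines of $S$ is exactly $\tau(U\cap Q)$, so it suffices to prove $\tau(U\cap Q)\subseteq U\cap Q$; since $\tau^2=\mathrm{id}$, this containment is automatically an equality.

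The first routine step is to record that $\tau$ preserves the Klein quadric: substituting the coordinates of $\tau(x)$ into the form $x_0x_5-x_1x_4+x_2x_3$ returns the same form, so $\tau(Q)=Q$. As $\tau$ is a projectivity, $\tau(U)$ is again a three-dimensional subspace and $\tau(U\cap Q)=\tau(U)\cap\tau(Q)=\tau(U)\cap Q$. Moreover $U\cap Q$ is a non-singular elliptic quadric and hence spans $U$, so already the inclusion $\tau(U)\cap Q\subseteq U$ would force $\tau(U)=U$. In other words, the whole statement collapses to the single claim that $\tau$ stabilises $U$.

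To establish $\tau(U)=U$ I would exploit the polarity of $Q$. A direct check shows that $\tau$ is an isometry of the symmetric bilinear form $B$ polarising $Q$, that is $B(\tau x,\tau y)=B(x,y)$; concretely the matrix of $\tau$ coincides with the Gram matrix of $B$ and squares to the identity. An isometry commutes with the induced polarity, so writing $U=\ell^{\perp}$ for the external line $\ell$ of the construction preceding the lemma gives $\tau(U)=\tau(\ell^{\perp})=(\tau\ell)^{\perp}$. Thus $\tau(U)=U$ becomes equivalent to $\tau(\ell)=\ell$, which I would verify from the explicit data of the spread, equivalently by checking that the pencil of linear forms cutting out $U$ is invariant under the coordinate substitution $\tau$. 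Combined with the previous paragraph this yields $\tau(U\cap Q)=U\cap Q$, which is the assertion.

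The main obstacle is precisely this last point. Since $\tau$ certainly does not fix an arbitrary line external to $Q$, the equality $\tau(\ell)=\ell$ is not a formal consequence of $\tau(Q)=Q$ but must be drawn from the specific symmetric coordinatisation used for the Desarguesian spread; the preceding reductions are otherwise immediate. I therefore expect the cleanest route to be to bypass the choice of $\ell$ altogether and instead verify directly that the linear system defining $U$ is carried to itself by $\tau$, after which the elliptic-quadric spanning argument and $\tau(Q)=Q$ finish the proof.
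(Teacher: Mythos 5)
Your reduction is correct, and it isolates exactly what the paper's one-sentence proof (``a consequence of the relation between primary and dual coordinates'') leaves unsaid: by Lemma~\ref{lemma:1} the dual coordinates of the lines of $S$ are the image of their primary coordinates under the involution $\tau$, the substitution check gives $\tau(Q)=Q$, and since the elliptic quadric $U\cap Q$ spans $U$, the lemma is equivalent to the invariance $\tau(U)=U$. The gap is your last step. You propose to verify $\tau(\ell)=\ell$, equivalently the $\tau$-invariance of the pencil of linear forms cutting out $U$, ``from the explicit data of the spread''; but the lemma asserts its conclusion for \emph{every} three-space $U$ whose section with $Q$ is a complete intersection carrying the primary coordinates of a spread, and no coordinatisation is supplied, so there is nothing to check against. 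A proof along your lines would have to show that the hypotheses \emph{force} $\tau(U)=U$, and they do not.

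In fact no argument can close this gap, because the invariance genuinely fails for admissible $U$; the statement as it stands is false. Over $\mathbb{F}=\mathbb{F}_5$ fix $\omega$ with $\omega^2=2$ and let $S$ be the Desarguesian spread coming from the $\mathbb{F}_{25}$-structure on $\mathbb{F}_5^4$, with lines $\ell_\infty=\langle(0,0,1,0),(0,0,0,1)\rangle$ and $\ell_{a+b\omega}=\langle(1,0,a,b),(0,1,2b,a)\rangle$. Their primary coordinates are
\[
(1:2b:a:-a:-b:a^{2}-2b^{2}),\qquad (0:0:0:0:0:1),
\]
and these are precisely the $26$ points of $U\cap Q$ for the three-space $U$ defined by $X_1+2X_4=0=X_2+X_3$: on $U$ the quadric becomes $x_0x_5+2x_4^{2}-x_3^{2}$, which is elliptic because $2$ is a non-square in $\mathbb{F}_5$, and an elliptic quadric of $PG(3,\mathbb{F}_5)$ has $26$ points while $S$ has $26$ lines. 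So the hypotheses of Lemma~\ref{lemma:2} hold in full. But the dual coordinates of $\ell_\omega$ --- computed either from the two planes $X_0-X_3=0$ and $2X_1-X_2=0$ through $\ell_\omega$, or from its primary coordinates $(1:2:0:0:-1:-2)$ via Lemma~\ref{lemma:1} --- are $(-2:1:0:0:-2:1)$, which violates $X_1+2X_4=0$ since $1+2\cdot(-2)=-3\neq 0$ in $\mathbb{F}_5$; indeed $\tau(U)$ is the different three-space $2X_1+X_4=0=X_2+X_3$. Thus $\tau(U)=U$ is an additional hypothesis, not a consequence: it holds for symmetric choices of $U$ such as the paper's running example $X_0+X_5=0=X_1-X_4$, and Lemma~\ref{lemma:2} (together with Theorem~\ref{theorem:2}, which rests on it) is only valid for such $\tau$-invariant $U$, or after replacing $U$ by $\tau(U)$ in its conclusion. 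Your ``main obstacle'' is therefore not a loose end in your write-up but a counterexample to the unrestricted statement; your instinct that $\tau(\ell)=\ell$ cannot be formally deduced was exactly right, and on this point your analysis is more careful than the paper's own proof, which tacitly assumes the invariance you isolated.
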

\begin{proof}
This is a consequence of the relation between primary and dual coordinates. 
\end{proof}

The representation of a Desarguesian spread in the Klein quadric as $U\cap Q$ gives a very simple decoding algorithm.

Let $C\subseteq G_{\mathbb{F}}(2,4)$ be the spread code defined by the intersection of $Q$ with a suitable three-dimensional projective subspace $U$ of $PG(5,\mathbb{F})$. Let $$\left\{\begin{array}{ccc}a_0X_0+a_1X_1+a_2X_2+a_3X_3+a_4X_4+a_5X_5&=&0\\b_0X_0+b_1X_1+b_2X_2+b_3X_3+b_4X_4+b_5X_5&=&0\end{array}\right.$$ be the equations defining $U$.  
Assume that a line $c\in C$ is sent  and received as a subspace $x$ with at most one error. 
The subspace $x$ is represented in the form of a set of points of $PG(3,\mathbb{F})$ spanning a projective space of dimension either zero, one or two. 
In the first case, $x$ is a point contained in the line $c$. 
In the second case, $x$ equals the sent codeword $c$. 
Then no decoding is needed. 
In the third case, $x$ is a plane containing $c$. 

\begin{theorem}
\label{theorem:1}
Assume that the received subspace $x$ is a point $p=(p_0:p_1:p_2:p_3)$. 
Then the line $c$ that was sent is defined by the equations
$$\left\{\begin{array}{lcl} 
(-a_0p_1-a_1p_2-a_2p_3)X_0+(a_0p_0-a_3p_2-a_4p_3)X_1&&\\+(a_1p_0+a_3p_1-a_5p_3)X_2+(a_2p_0+a_4p_1+a_5p_2)X_3&=&0\\\\
(-b_0p_1-b_1p_2-b_2p_3)X_0+(b_0p_0-b_3p_2-b_4p_3)X_1&&\\+(b_1p_0+b_3p_1-b_5p_3)X_2+(b_2p_0+b_4p_1+b_5p_2)X_3&=&0\end{array}\right.$$
\end{theorem}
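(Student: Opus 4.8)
The plan is to use that the received point $p$ lies on the sent line $c$, and to recover $c$ as the set of points $X=(X_0:X_1:X_2:X_3)$ for which the line joining $p$ and $X$ has primary Pl\"ucker coordinates in $U$. First I would insert $p$ as the point $(a_0:a_1:a_2:a_3)$ and a variable point $X$ as the point $(b_0:b_1:b_2:b_3)$ into the defining formula for the primary Pl\"ucker coordinates, obtaining $q_0=p_0X_1-p_1X_0$, $q_1=p_0X_2-p_2X_0$, $q_2=p_0X_3-p_3X_0$, $q_3=p_1X_2-p_2X_1$, $q_4=p_1X_3-p_3X_1$ and $q_5=p_2X_3-p_3X_2$, each linear in the $X_j$.

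Then I would substitute these six expressions for the $q_i$ into the two linear equations $\sum_i a_iX_i=0$ and $\sum_i b_iX_i=0$ defining $U$ and collect the coefficient of each $X_j$. This is the only genuine computation, and it is mechanical: the coefficients of $X_0,X_1,X_2,X_3$ produced by the first equation are $-a_0p_1-a_1p_2-a_2p_3$, $a_0p_0-a_3p_2-a_4p_3$, $a_1p_0+a_3p_1-a_5p_3$ and $a_2p_0+a_4p_1+a_5p_2$, which are exactly those in the first displayed equation, and the $b_i$ give the second. Because $c$ passes through $p$ and lies in $C=U\cap Q$, for any point $X\neq p$ on $c$ the join of $p$ and $X$ is $c$ itself, so its Pl\"ucker coordinates are a scalar multiple of those of $c$ and hence satisfy both defining equations of $U$; the point $p$ satisfies the derived equations trivially since all six minors vanish when $X=p$. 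Therefore every point of $c$ lies on the common zero locus $L$ of the two displayed equations.

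It remains to check that $L$ equals $c$ and is not a larger linear subspace, i.e.\ that the two derived linear forms are linearly independent. I would read the substitution as the contraction $\alpha\mapsto\iota_p\alpha$ sending the linear form $\alpha=\sum_i a_iX_i$ on the Pl\"ucker space, regarded as a $2$-form on $V$, to the linear form on $PG(3,\mathbb{F})$ with the coefficients just computed, and note that its kernel is $\bigwedge^2\mathrm{Ann}(p)$ (with $\mathrm{Ann}(p)$ the linear forms vanishing at $p$), a three-dimensional space every element of which is decomposable. A dependence between the two derived forms would place a nonzero combination $\gamma$ of the defining forms of $U$ in this kernel, forcing $\gamma$ to be decomposable; but a decomposable form in the pencil spanned by the defining forms of $U$ corresponds, under the polarity of $Q$, to a point of the line $\ell=U^{\perp}$ lying on $Q$, which is impossible since $\ell\cap Q=\emptyset$ by the construction of $U$. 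Hence the two forms are independent, $L$ is a line, and since $L\supseteq c$ we conclude $L=c$.

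The main obstacle is precisely this last step. The substitution and the collecting of coefficients are routine; what requires care is verifying that the two equations cut out a line rather than a plane, and this is exactly where the geometry behind the construction is used, namely that $U$ is the polar of a line disjoint from $Q$ (equivalently that $U\cap Q$ contains no lines, as established in the construction preceding Lemma~\ref{lemma:2}). The bookkeeping of primary versus dual coordinates in identifying decomposable forms of the pencil with points of $\ell\cap Q$ is the only place where sign and index conventions must be tracked carefully.
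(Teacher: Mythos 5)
Your proposal is correct, and its computational core is exactly the paper's (one-line) proof: compute the Pl\"ucker coordinates of the join of $p$ and a variable point $X$, substitute them into the two equations of $U$, and collect coefficients; your coefficient bookkeeping matches the theorem's display. Where you go beyond the paper is the final step: the paper simply asserts that the resulting pair of equations ``defines'' $c$, leaving implicit the verification that the two derived linear forms are independent, so that their zero locus is a line and not a plane. Your argument for this---identifying the substitution with the contraction $\iota_p$, noting its kernel is $\bigwedge^2\mathrm{Ann}(p)$ whose elements are all decomposable, and ruling out decomposable forms in the pencil through $U$ because they correspond under the polarity of $Q$ to points of $\ell=U^{\perp}$ on $Q$, contradicting $\ell\cap Q=\emptyset$---is correct and is a genuine addition. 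It is worth noting that the same exactness can be obtained more cheaply from the spread property itself: any point $X\neq p$ in the common zero locus gives a line $pX$ whose Pl\"ucker coordinates lie in $U$ and automatically on $Q$, hence $pX$ is a spread line through $p$, and uniqueness of the spread line through $p$ forces $pX=c$; this yields the reverse inclusion without any exterior-algebra argument. Either way, your proof is complete and slightly stronger than what the paper records.
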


\begin{proof}
Calculate the Pl\"ucker coordinates for the line spanned by the points $p$ and $X=(X_0:X_1:X_2:X_3)$ and apply the equations of $U$. 
\end{proof}

\begin{theorem}
\label{theorem:2}
Assume that the received subspace $x$ is a plane defined by equations $p_0^*X_0+p_1^*X_1+p_2^*X_2+p_3^*X_3=0$. 
Then the line $c$ that was sent is spanned by the points
\bigskip\\
$(-a_0p_1^*-a_1p_2^*-a_2p_3^*:a_0p_0^*-a_3p_2^*-a_4p_3^*:a_1p_0^*+a_3p_1^*-a_5p_3^*:a_2p_0^*+a_4p_1^*+a_5p_2^*)$
\bigskip\\and\bigskip\\
$(-b_0p_1^*-b_1p_2^*-b_2p_3^*:b_0p_0^*-b_3p_2^*-b_4p_3^*:b_1p_0^*+b_3p_1^*-b_5p_3^*:b_2p_0^*+b_4p_1^*+b_5p_2^*).$
\end{theorem}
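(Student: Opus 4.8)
The plan is to deduce Theorem \ref{theorem:2} from the computation already carried out for Theorem \ref{theorem:1}, by invoking the point--plane duality of $PG(3,\mathbb{F})$ together with Lemma \ref{lemma:2}. The situation here is precisely the dual of the one in Theorem \ref{theorem:1}: there the received subspace is a point $p$ lying on the sought line $c$, whereas here it is a plane $x$ containing $c$. A plane $x$ with equation $p_0^*X_0+p_1^*X_1+p_2^*X_2+p_3^*X_3=0$ corresponds to the point $p^*=(p_0^*:p_1^*:p_2^*:p_3^*)$ of the dual projective space, and the line $c\subseteq x$ corresponds to a line $\hat c$ through $p^*$ in that dual space. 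The key observation is that the dual Pl\"ucker coordinates of $c$ are exactly the primary Pl\"ucker coordinates of $\hat c$ computed in the dual space.

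First I would record, using Lemma \ref{lemma:2}, that the dual Pl\"ucker coordinates of every line of the spread lie in $U\cap Q$ and hence satisfy the same two linear equations, with coefficients $a_i$ and $b_i$, that define $U$. This is the single ingredient that makes the dual argument symmetric to the primary one.

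Next I would run the computation from the proof of Theorem \ref{theorem:1} verbatim in the dual space, with $p^*$ playing the role of $p$. Spanning $\hat c$ by $p^*$ and a generic dual point $X=(X_0:X_1:X_2:X_3)$, the primary Pl\"ucker coordinates of $\hat c$ in the dual space are the six $2\times 2$ minors of the matrix with rows $p^*$ and $X$; by the previous paragraph these are the dual Pl\"ucker coordinates of $c$ and therefore satisfy both equations of $U$. Substituting these minors into the two equations of $U$ and collecting by $X_0,X_1,X_2,X_3$ produces exactly the two linear forms displayed in Theorem \ref{theorem:2}; indeed the collected coefficients are obtained from those of Theorem \ref{theorem:1} simply by replacing each $p_j$ with $p_j^*$, so no fresh calculation is needed.

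Finally I would translate the conclusion back to $PG(3,\mathbb{F})$. A linear relation $\sum_j c_j X_j=0$ among the dual coordinates $X$ describes, in the dual space, the planes of $PG(3,\mathbb{F})$ passing through the fixed point $(c_0:c_1:c_2:c_3)$. The two relations obtained above therefore pick out the planes through each of the two coefficient points listed in Theorem \ref{theorem:2}, and their common solutions are the planes containing both points, that is, the planes through the line joining them. Since $\hat c$ is by construction the pencil of planes through $c$, this line is $c$ itself, so $c$ is spanned by the two stated points. The main obstacle is purely one of bookkeeping: one must keep track of which projective space each coordinate inhabits, and remember that a linear condition on the dual coordinates encodes a point, not a plane, of $PG(3,\mathbb{F})$. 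Once the duality is set up correctly, Lemma \ref{lemma:2} supplies the needed equations and the remaining algebra is identical to that of Theorem \ref{theorem:1}.
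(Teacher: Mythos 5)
Your proof is correct and takes exactly the paper's approach: the paper's entire proof of Theorem~\ref{theorem:2} is ``Dualize Theorem~\ref{theorem:1} and use Lemma~\ref{lemma:2},'' which is precisely the argument you carry out. Your write-up merely makes explicit the details the paper leaves implicit, namely that the dual Pl\"ucker coordinates of $c$ are the primary Pl\"ucker coordinates of the pencil of planes through $c$ viewed as a line in the dual space, that Lemma~\ref{lemma:2} guarantees these coordinates satisfy the same equations of $U$, and that the resulting linear forms in dual coordinates must be reinterpreted as points rather than planes of $PG(3,\mathbb{F})$.
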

\begin{proof}
Dualize Theorem \ref{theorem:1} and use Lemma \ref{lemma:2}. 
\end{proof}
Theorem \ref{theorem:1} and Theorem \ref{theorem:2} give an algorithm for decoding spread codes in $G_{\mathbb{F}}(2,4)$. 
\begin{algorithm}
\caption{Let $S=U\cap Q$ be a line spread code defined by the complete intersection of a projective subspace $U$ of dimension three and the Klein quadric $Q$. 
Given a received subspace $x\in PG(3,\mathbb{F})$, which was sent as $c\in S$ and is given as a set of vectors spanning $x$, this algorithm calculates $c$. }
\label{alg:0}
\begin{algorithmic}
\State Use Gaussian elimination to see if $x$ has vector space dimension one, two or three. 
\State Projectively, $x$ is then a point, a line, or a plane.
\State To calculate the sent line $c$:
\If{$x$ is a point} 
\State use Theorem \ref{theorem:1} to find $c$; 
\EndIf
\If{$x$ is a line} 
\State $c:=x$;
\EndIf 
\If{$x$ is a plane} 
\State use Theorem \ref{theorem:2} to find $c$.
\EndIf
\State Return $c$;
\end{algorithmic}
\end{algorithm}

The complexity of this decoding algorithm is low. Indeed, when the received subspace has vector space dimension three, all that is needed is to calculate the coefficients of the sent subspace, requiring 24 multiplications and 16 additions.
The same number of operations is required to calculate the coefficients of the equations defining the sent subspace in case the sent subspace has vector space dimension one.  
If we assume that decoding a subspace means returning a basis of the sent subspace, in this case we are also required to solve that system of two linear equations, requiring at most 15 multiplications and 7 additions. 
Note that the equations of $U$ can be chosen carefully so that most of the $a_i$ and the $b_i$ are zero. This substantially reduces the number of operations further.  For example, by choosing $a_0=a_5=1$ and $b_1=-b_4=1$ and $a_i=b_j=0$ otherwise, the 24 multiplications in the calculation of the coefficients are reduced to 8 and no additions are needed. 

The complexity of determining the dimension of the received subspace is larger. 
In the case of network coding it is possible for a sink in a network to receive a large number of vectors, of which perhaps only a few are linearly independent. For decoding purposes it would typically be assumed that these vectors are already reduced to a set of linearly independent vectors. The complexity of this procedure is at most the complexity of Gaussian elimination of the system formed by the received vectors. The complexity of Gaussian elimination of a matrix of size $k\times n$, with $k\leq n$, is $k^2n$.  


\section{Decoding geometric subspace codes with explicit Schubert calculus}
\label{section:3}
In this section we present an error-correction decoding algorithm of geometric subspace codes of constant dimension $k$ in $PG(2k+1,\mathbb{F})$.  The algorithm relies on explicit Schubert calculus. 
Some preliminaries are needed.  
\subsection{The Pl\"ucker embedding of the Grassmannian}
An element $x$ of the Grassmannian $G_{\mathbb{F}}(k+1,n+1)$ is a projective subspace of dimension $k$ of $PG(n,\mathbb{F})$. It is spanned by $k+1$ points in general position corresponding to $k+1$ linearly independent vectors of $V(n+1,\mathbb{F})$. If $n+1=2(k+1)$, then dually, $x$ is the intersection of $n-k=k+1$ hyperplanes in general position, corresponding to $k+1$ linearly independent vectors of the dual space  $V^*(n+1,\mathbb{F})$. 
The exterior product of the vectors and dual vectors representing the $k+1$ points and the $k+1$ hyperplanes defines the primary and the dual Pl\"ucker coordinates of $x$, respectively, as the projectivization of vectors of the exterior algebra $\bigwedge^{k+1}V(n+1,\mathbb{F})$. 
If $\left(e_i\right)_i$ is a basis of $V(n+1,\mathbb{F})$, then $\left(\hat{e}_{\bf i}\right)_{\bf i}=\left(e_{i_0} \wedge \cdots \wedge e_{i_{k}}\right)_{(i_0,\dots,i_{k})}$  is a basis of $\bigwedge^{k+1}V(n+1,\mathbb{F})$, where ${\bf{i}}=(i_0,\dots,i_{k})$ goes through the combinations of distinct numbers between $0$ and $n$. We order the basis elements in lexicographic order. If the $k+1$ points spanning $x$ are $a_0,\dots,a_{k}$, consider the  matrix $A=(a_i)$ of dimension $(k+1)\times (n+1)$ with $a_0,\dots,a_{k}$ as row vectors. The primary coordinates of $m$ in the basis $\left(\hat{e}_{(i_0,\dots,i_{k})}\right)$ are then the projectivization of the determinants of the $\binom{n+1}{k+1}$ minors of dimension $(k+1)\times(k+1)$  of $A$, choosing columns in the same order as the basis. The dual coordinates are calculated analogously from the coefficients of equations defining the hyperplanes. For details see for example \cite{Hodge,Kleiman}.  
The primary and the dual coordinates are related so that, apart from some sign changes,  one set of coordinates is obtained from the other by reversing the order.  
\begin{lemma}
\label{lemma:2a}
\cite{Hodge}Let $x$ be a subspace of $G_{\mathbb{F}}(k+1,n+1)$. If $$(p_{{\bf i}_0}:\dots:p_{{\bf i}_N})$$ are the primary Pl\"ucker coordinates of $x$ , then the dual Pl\"ucker coordinates of $x$ are $$(p^{{\bf i'}_0}:\dots:p^{{\bf i'}_N})=(\epsilon_{{\bf i}_N}p_{{\bf i}_N}:\dots:\epsilon_{{\bf i}_0}p_{{\bf i}_0}),$$ where 
\begin{itemize}
\item $\epsilon_{\bf i}=1$ if $(i_0,\dots,i_k,i_{k+1},\dots,i_n)$ is an even permutation, and 
\item  $\epsilon_{\bf i}=-1$ if $(i_0,\dots,i_k,i_{k+1},\dots,i_n)$ is an odd permutation, 
\end{itemize}
where ${\bf i}=(i_0,\dots,i_k)$ and ${\bf i'}=(i_{k+1},\dots,i_n)$  are in lexicographic order and $N=\binom{n+1}{k+1}-1$. 
\end{lemma}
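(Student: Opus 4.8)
My plan is to recast the statement in terms of the natural wedge duality on exterior powers and reduce it to a one-line computation on basis elements. Write $U\subseteq V(n+1,\mathbb{F})$ for the $(k+1)$-dimensional vector space with $x=\mathbb{P}(U)$, let $\omega=a_0\wedge\cdots\wedge a_k\in\bigwedge^{k+1}V$ be the decomposable vector whose coordinates are the primary Pl\"ucker coordinates $p_{\mathbf i}$, and let $U^{\circ}=\{f\in V^*:f|_U=0\}$ be the annihilator, so that the dual Pl\"ucker coordinates $p^{\mathbf j}$ are the coordinates of $\eta=f_{k+1}\wedge\cdots\wedge f_n\in\bigwedge^{n-k}V^*$ for any basis $f_{k+1},\dots,f_n$ of $U^{\circ}$. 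After fixing the top form $e_0\wedge\cdots\wedge e_n$, the wedge product gives a perfect pairing $\bigwedge^{k+1}V\times\bigwedge^{n-k}V\to\mathbb{F}$, hence an isomorphism $T\colon\bigwedge^{k+1}V\to\bigl(\bigwedge^{n-k}V\bigr)^{*}\cong\bigwedge^{n-k}V^*$ sending $\omega$ to the functional $\xi\mapsto[\omega\wedge\xi]$, where $[\,\cdot\,]$ denotes the coefficient relative to $e_0\wedge\cdots\wedge e_n$.

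First I would show that $T$ carries the primary Pl\"ucker point to the dual Pl\"ucker point, i.e. $[T(\omega)]=[\eta]$ in $\mathbb{P}(\bigwedge^{n-k}V^*)$. Since $\omega\wedge a_j=0$ for every $j$, the functional $T(\omega)$ vanishes on every $\xi$ that involves a vector of $U$; equivalently it kills $U\wedge\bigwedge^{n-k-1}V$, so it factors through the one-dimensional space $\bigwedge^{n-k}(V/U)$. Under the canonical identifications $\bigwedge^{n-k}V^*\cong(\bigwedge^{n-k}V)^*$ and $(V/U)^*\cong U^{\circ}$, the pullback along the quotient map of a generator of this line is exactly a generator of the one-dimensional subspace $\bigwedge^{n-k}U^{\circ}=\langle\eta\rangle$. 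As $T$ is an isomorphism and $\omega\neq 0$, this forces $T(\omega)$ to be a nonzero multiple of $\eta$, which is the required projective equality.

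Next I would compute $T$ on the standard basis to read off the signs. For a $(k+1)$-subset $\mathbf i$ with complementary $(n-k)$-subset $\mathbf i'=\mathbf i^{c}$ one has $\hat e_{\mathbf i}\wedge\hat e_{\mathbf j}=0$ unless $\mathbf j=\mathbf i'$, while $\hat e_{\mathbf i}\wedge\hat e_{\mathbf i'}=\epsilon_{\mathbf i}\,e_0\wedge\cdots\wedge e_n$, where $\epsilon_{\mathbf i}$ is precisely the sign of the permutation $(i_0,\dots,i_k,i_{k+1},\dots,i_n)$. Hence $T(\hat e_{\mathbf i})=\epsilon_{\mathbf i}\,\hat e^{*}_{\mathbf i'}$, and applying $T$ to $\omega=\sum_{\mathbf i}p_{\mathbf i}\hat e_{\mathbf i}$ gives $T(\omega)=\sum_{\mathbf i}\epsilon_{\mathbf i}p_{\mathbf i}\,\hat e^{*}_{\mathbf i'}$. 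Combined with the previous step this identifies the $\mathbf i'$-coordinate of the dual point as $\epsilon_{\mathbf i}p_{\mathbf i}$; the reversal of the listing in the statement then follows from the elementary observation that complementation reverses lexicographic order on the $(k+1)$-subsets of $\{0,\dots,n\}$, so that the dual index $\mathbf i'_0$ corresponds to $\mathbf i_N$, exactly as in the $G_{\mathbb{F}}(2,4)$ case of Lemma \ref{lemma:1}.

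The conceptual steps are essentially forced, so I expect the only real care to lie in the sign bookkeeping: one must check that the paper's convention for $\epsilon_{\mathbf i}$ coincides with the shuffle sign produced by $\hat e_{\mathbf i}\wedge\hat e_{\mathbf i'}$, and that the identifications $\bigwedge^{n-k}V^*\cong(\bigwedge^{n-k}V)^*$ and $(V/U)^*\cong U^{\circ}$ are normalized consistently. Because the asserted equality is projective, any overall nonzero scalar, including a global sign, is irrelevant, which removes most of this risk; the genuine content is that the per-coordinate signs are the shuffle signs $\epsilon_{\mathbf i}$, and this is precisely what the basis computation of $T$ delivers.
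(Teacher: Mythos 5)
Your proof is correct, but it is worth pointing out that the paper does not actually prove this lemma: it is quoted, with the citation \cite{Hodge}, as a classical fact from Hodge--Pedoe, exactly as the rank-one special case Lemma \ref{lemma:1} is. So your argument supplies a complete, self-contained derivation of something the paper outsources to the literature, and its three ingredients all check out. (i) The functional $T(\omega)=[\omega\wedge\,\cdot\,]$ kills $U\wedge\bigwedge^{n-k-1}V$ because $\omega\wedge u=0$ for $u\in U$; that subspace is the kernel of $\bigwedge^{n-k}V\to\bigwedge^{n-k}(V/U)$, whose target is a line, so the functionals vanishing there form a one-dimensional space, and it contains the functional given by $\eta=f_{k+1}\wedge\cdots\wedge f_n$ (nonzero, and vanishing on $u\wedge\xi'$ for $u\in U$ since the defining determinant then has a zero column). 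Note that the paper's dual coordinates are precisely the maximal minors of the coefficient matrix of the hyperplane equations, i.e.\ the coordinates of $\eta$, so your reading of the definition is the right one. (ii) The basis computation $\hat e_{\mathbf i}\wedge\hat e_{\mathbf j}=\delta_{\mathbf j,\mathbf i'}\,\epsilon_{\mathbf i}\,e_0\wedge\cdots\wedge e_n$ gives the shuffle sign, which agrees with the paper's convention since a permutation and its inverse have the same sign. (iii) Complementation does reverse lexicographic order: for sorted $(k+1)$-subsets, $A<B$ in lex order exactly when the least element of $A\,\triangle\,B$ lies in $A$, and since $A\,\triangle\,B=A^c\,\triangle\,B^c$, that least element then lies in $B^c$, giving $B^c<A^c$. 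Projectivity absorbs the overall scalar, as you say. The only streamlining I would suggest is in step one: two nonzero functionals on $\bigwedge^{n-k}V$ vanishing on a common hyperplane are automatically proportional, so the detour through the identifications $(V/U)^*\cong U^{\circ}$ and $\bigwedge^{n-k}V^*\cong(\bigwedge^{n-k}V)^*$ can be replaced by the direct zero-column check on $\eta$.
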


It is well-known that this implies that the image of $G_{\mathbb{F}}(k+1,n+1)$ under the Pl\"ucker embedding is an algebraic variety $\mathcal{V}$ of $PG\left(\binom{n+1}{k+1}-1,\mathbb{F}\right)$ defined by the intersection of quadrics. The quadratic equations defining these quadrics are sometimes called the Pl\"ucker relations. For more details, see for example \cite{Hodge,Kleiman}. 

\subsection{Schubert varieties}
A \emph{flag} is a sequence of nested projective subspaces $F:A_0\subsetneq A_1\subsetneq \cdots\subsetneq A_k$ of $PG(n,\mathbb{F})$. The \emph{Schubert variety} defined by $F$ is the set $\Omega(F)$ of projective subspaces $X$ of dimension $k$ satisfying $\dim(X\cap A_i)\geq i$. 
Therefore, a Schubert variety is a set of points in the Pl\"ucker embedding $\mathcal{V}\subseteq PG\left(\binom{n+1}{k+1}-1,\mathbb{F}\right)$ of $G_{\mathbb{F}}(k+1,n+1)$. 
For example, the received point $p$ in Section \ref{section:2} defines a flag $F:p\subsetneq PG(3,\mathbb{F})$. The Schubert variety of this flag is the set of lines passing through $p$, with Pl\"ucker coordinates forming a plane contained in the Klein quadric. 

It is well-known that a Schubert variety is the intersection of a projective subspace $W$ and $\mathcal{V}$ \cite{Hodge,Kleiman}. 
The two classes of planes contained in the Klein quadric, corresponding to the points and the planes of $PG(3,\mathbb{F})$ under the Klein correspondence, give an interesting example of when $W$ is contained in $\mathcal{V}$. 
In general the subspace $W$ is not contained in $\mathcal{V}$.  
The Klein correspondence generalizes to flags $F:A_0\subsetneq A_1\subsetneq \cdots\subsetneq A_k$ in $PG(2k+1,\mathbb{F})$ where $A_{k-1}$ has dimension $k-1$, $A_k=PG(2k+1,\mathbb{F})$ and, for the rest of the indices, $A_i$ is any subspace of dimension $i$ contained in $A_{i+1}$. This is described in Lemma \ref{lemma:3}, and can be found for example in \cite{Hodge}. 

\begin{lemma}
\label{lemma:3}
\begin{enumerate}
\item The set of $k$-dimensional projective subspaces of $PG(2k+1,\mathbb{F})$ intersecting in a fixed $(k-1)$-dimensional projective subspace, corresponds to the points of a $(k+1)$-dimensional projective subspace contained in the Pl\"ucker embedding $\mathcal{V}$ of the Grassmannian $G_{\mathbb{F}}(k+1,2k+2)$. 

\item Dually, the set of $k$-dimensional projective subspaces of $PG(2k+1,\mathbb{F})$ which are contained in a fixed $(k+1)$-dimensional projective subspace, correspond to the points of a $(k+1)$-dimensional projective subspace contained in the Pl\"ucker embedding $\mathcal{V}$ of the Grassmannian $G_{\mathbb{F}}(k+1,2k+2)$. 
\end{enumerate}
\end{lemma}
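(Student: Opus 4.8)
The plan is to work directly in the exterior algebra $\bigwedge^{k+1}V$, where $V=V(2k+2,\mathbb{F})$, and to exploit the fact that wedging with a fixed decomposable vector is a linear operation. Throughout I identify a $j$-dimensional projective subspace of $PG(2k+1,\mathbb{F})$ with the $(j+1)$-dimensional vector subspace of $V$ it projectivizes.

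For part (1), let the fixed $(k-1)$-dimensional projective subspace $A_{k-1}$ be the projectivization of a $k$-dimensional vector subspace $W=\langle w_1,\dots,w_k\rangle\subseteq V$. A $k$-dimensional projective subspace $X$ lies in the family exactly when it contains $A_{k-1}$, that is, when the corresponding $(k+1)$-dimensional vector subspace $U$ satisfies $W\subseteq U$; then $U=W+\langle v\rangle$ for some $v\notin W$, and its primary Pl\"ucker coordinate is the projectivization of $\omega\wedge v$, where $\omega=w_1\wedge\cdots\wedge w_k$. I would then study the linear map $\phi\colon V\to\bigwedge^{k+1}V$ given by $v\mapsto\omega\wedge v$. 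Its kernel is precisely $W$, since $\omega\wedge v=0$ if and only if $v\in W$, so its image is a vector subspace of dimension $(2k+2)-k=k+2$, whose projectivization is a $(k+1)$-dimensional projective subspace of $PG\!\left(\binom{2k+2}{k+1}-1,\mathbb{F}\right)$. Every nonzero element of this image has the form $w_1\wedge\cdots\wedge w_k\wedge v$ with $v\notin W$, hence is decomposable and lies in $\mathcal{V}$; conversely each such point comes from exactly one $U\supseteq W$. This yields the bijection claimed in (1).

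For part (2), fix the $(k+1)$-dimensional projective subspace as the projectivization of a $(k+2)$-dimensional vector subspace $B\subseteq V$. A $k$-dimensional projective subspace is contained in it exactly when the corresponding $(k+1)$-dimensional vector subspace $U$ satisfies $U\subseteq B$, and the Pl\"ucker coordinate of such a $U$ then lies in the linear subspace $\bigwedge^{k+1}B\subseteq\bigwedge^{k+1}V$, of dimension $\binom{k+2}{k+1}=k+2$. Its projectivization is therefore a $(k+1)$-dimensional projective subspace, and the content of the statement is that it lies in $\mathcal{V}$. Alternatively, I would deduce (2) from (1) by passing to dual Pl\"ucker coordinates through Lemma \ref{lemma:2a}: the annihilator of $B$ is a $k$-dimensional subspace of the dual space, and the $(k+1)$-dimensional subspaces of $V$ contained in $B$ correspond to the $(k+1)$-dimensional dual subspaces containing that annihilator, which is exactly the configuration of part (1) in the dual space.

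The only genuine obstacle is the decomposability check, and it is mild. In (1) it is automatic, since $\omega\wedge v$ is visibly a wedge of $k+1$ vectors. In (2) it rests on the classical fact that every vector in the next-to-top exterior power $\bigwedge^{k+1}B$ of a $(k+2)$-dimensional space is decomposable: a dimension count shows that the Pl\"ucker image of $G_{\mathbb{F}}(k+1,k+2)$ already fills $\mathbb{P}\!\left(\bigwedge^{k+1}B\right)$, so no Pl\"ucker relations cut it down. The remaining work---matching vector-space with projective dimensions and confirming that each correspondence is bijective---is routine bookkeeping.
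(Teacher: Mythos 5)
Your proposal is correct, and it diverges from the paper in an interesting way. For part (1) you and the paper do essentially the same thing: the paper fixes a basis extension $e_0=x_0,\dots,e_{k-1}=x_{k-1},e_k,\dots,e_{2k+1}$ and checks by hand that the vectors $x\wedge e_k,\dots,x\wedge e_{2k+1}$ are $k+2$ independent points of $\mathcal{V}$ whose span consists entirely of vectors $x\wedge v$, whereas you package the same computation as rank--nullity for the map $\phi(v)=\omega\wedge v$ with $\ker\phi=W$; your version is slightly cleaner but not a different idea. The real difference is part (2): the paper simply dualizes part (1) via Lemma \ref{lemma:2a} (which is also your stated alternative), while your primary route is direct, identifying the family with $\mathbb{P}\left(\bigwedge^{k+1}B\right)$ and invoking the classical fact that every vector in the next-to-top exterior power of a $(k+2)$-dimensional space is decomposable. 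That route buys a proof that never leaves primary coordinates, but its justification as you state it --- ``a dimension count shows the Pl\"ucker image fills $\mathbb{P}\left(\bigwedge^{k+1}B\right)$'' --- is the one loose point: over an arbitrary field $\mathbb{F}$ (the paper's setting, including finite fields) a variety-dimension argument does not immediately account for all $\mathbb{F}$-rational points. It can be repaired elementarily, either by counting ($G_{\mathbb{F}_q}(k+1,k+2)$ and $PG(k+1,\mathbb{F}_q)$ have the same cardinality and the Pl\"ucker map is injective), or by the standard argument that for nonzero $\omega\in\bigwedge^{m-1}W$ with $\dim W=m$ the kernel of $v\mapsto v\wedge\omega$ has dimension $m-1$ and $\omega$ is proportional to the wedge of a basis of that kernel; alternatively your duality fallback, which is exactly the paper's proof, settles it with no extra work.
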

\begin{proof}
\begin{enumerate}
\item Let $x=\left\langle x_0,\cdots, x_{k-1}\right\rangle$ be a vector subspace representing a $(k-1)$-dimensional projective subspace of $PG(2k+1,\mathbb{F})$. Then the $k$-dimensional projective subspaces containing $x$ have Pl\"ucker coordinates of the form $x\wedge v=(x_0\wedge\cdots\wedge x_{k-1})\wedge v$ where $v$ is the vector representative of a point of $PG(2k+1,\mathbb{F})$. Fix a basis $V(2k+2,\mathbb{F})=\left\langle e_0=x_0,\dots,e_{k-1}=x_{k-1}, e_{k},\dots, e_{2k+1}\right\rangle$ and write $v=a_{k}e_{k}+\cdots+a_{2k+1}e_{2k+1}$. 
Then $x\wedge v= x\wedge (a_{k}e_{k}+\cdots+a_{2k+1}e_{2k+1}) = a_{k}x\wedge e_{k}+\cdots+a_{2k+1}x\wedge e_{2k+1}$, which is a linear combination of $k+2$ linearly independent vectors, the representatives of $k+2$ points of $\mathcal{V}$ spanning a projective subspace of dimension $k+1$ in $PG\left(\binom{2k+2}{k+1}-1,\mathbb{F}\right)$. 

If $y$ is a vector in the subspace $\left\langle x\wedge e_{k},\cdots, x\wedge e_{2k+1}\right\rangle$, then $y=c_kx\wedge e_{k}+\cdots + c_{2k+1}x\wedge e_{2k+1}=x\wedge (c_ke_k+\cdots + c_{2k+1}e_{2k+1}$, giving a point in $\mathcal{V}$, and so the entire projective subspace of dimension $k+1$ is contained in $\mathcal{V}$.
\item Dualize and use Lemma \ref{lemma:2a}.  
\end{enumerate} 
\end{proof}

More generally, the set of $k$-dimensional projective subspaces of $PG(2k+1,\mathbb{F})$ which intersect in a fixed $(k-1)$-dimensional projective subspace and are contained in a fixed $(k+s)$-dimensional projective subspace, correspond to the points of an $s$-dimensional projective subspace contained in $\mathcal{V}$. The proof of Lemma \ref{lemma:3} can be used to prove this, by choosing $v$ so that it is contained in the given $(k+s)$-dimensional projective subspace. The dual statement is also true.

\subsection{Explicit Schubert varieties for decoding}
\label{sec:explicit}
Let $\mathcal{C}$ in $PG(n,\mathbb{F})$ be a subspace code correcting $t$ errors, and assume that a subspace $c\in \mathcal{C}$ is sent. 
Given a received subspace $x$ of subspace distance at most $t$ from $c$, the decoding problem is to calculate $c$. 

An algebraic variety is a set of points such that their coordinates satisfy a set of polynomial equations. Consider a subspace code $\mathcal{C}$ of constant projective dimension $k$ in $PG(n,\mathbb{F})$, defined geometrically as an algebraic variety contained in the Pl\"ucker embedding of the Grassmannian $G_{\mathbb{F}}(k+1,n+1)$. We call such a code a \emph{geometric subspace code} of projective dimension $k$ in $PG(n,\mathbb{F})$. 
\begin{lemma}
Let $\mathcal{C}$ be a geometric subspace code of projective dimension $k$ in $PG(n,\mathbb{F})$. 
The set of codewords $\{c\}$ in $\mathcal{C}$ such that the intersection $c\cap x$ has smallest expected dimension $\delta$ can be calculated as the intersection of the code variety $C$ and the Schubert variety $\Omega(F(x,\delta,k))$ defined by any flag $F(x,\delta,k): A_0\subsetneq \cdots \subsetneq A_{k}$ such that $A_{\delta}=x$, $A_k=PG(n,\mathbb{F})$, and $\dim(A_i)=\dim(A_{i+1})-1$ for the the rest of the $i$. 
\end{lemma}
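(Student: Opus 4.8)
The plan is to prove the set equality $C\cap\Omega(F(x,\delta,k))=\{c\in\mathcal{C}:\dim(c\cap x)=\delta\}$ by showing that, among the $k+1$ incidence conditions defining the Schubert variety, only the single condition indexed by $\delta$ is binding, while all the others hold automatically for every $k$-dimensional subspace satisfying that one condition. Recall that $\Omega(F)$ is by definition the set of $k$-dimensional subspaces $X$ with $\dim(X\cap A_i)\ge i$ for all $i$, and that (as quoted earlier from \cite{Hodge,Kleiman}) it is cut out on $\mathcal{V}$ by a projective subspace $W$; so once the binding condition is isolated, intersecting with the code variety reduces to linear algebra. I would first record the dimensions forced by the flag, then run the two dimension-count arguments that make the remaining conditions redundant, and finally intersect with $C$.

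First I would pin down the dimensions. Writing $d=\dim(x)$ and using that every step of the flag raises the dimension by one except the step $A_\delta\subsetneq A_{\delta+1}$ (this is the one place where the rule $\dim(A_i)=\dim(A_{i+1})-1$ must fail, since $A_\delta=x$ and $A_k=PG(n,\mathbb{F})$ are prescribed), one gets $\dim(A_i)=d-\delta+i$ for $0\le i\le\delta$, so that $A_0\subsetneq\cdots\subsetneq A_\delta=x$ is a flag inside $x$, and $\dim(A_i)=n-k+i$ for $\delta< i\le k$, so that $x=A_\delta\subsetneq A_{\delta+1}\subsetneq\cdots\subsetneq A_k=PG(n,\mathbb{F})$ is a flag of subspaces containing $x$. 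For this to be a genuine strict flag one needs $k+d-n\le\delta\le d$, which is exactly the range in which the smallest expected dimension lives, since the projective dimension formula gives $\dim(c\cap x)\ge k+d-n$ for every codeword.

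The heart of the argument is the redundancy of all conditions with $i\ne\delta$. Suppose $X$ is a $k$-dimensional subspace with $\dim(X\cap x)\ge\delta$. For $i<\delta$ the subspace $A_i$ lies inside $x$, so $X\cap A_i=(X\cap x)\cap A_i$, and applying the dimension formula inside the projective space $x$ yields
$$\dim(X\cap A_i)\ge\dim(X\cap x)+\dim(A_i)-\dim(x)\ge\delta+(d-\delta+i)-d=i.$$
For $i>\delta$ the subspace $A_i$ has dimension $n-k+i$, so the dimension formula in $PG(n,\mathbb{F})$ gives $\dim(X\cap A_i)\ge k+(n-k+i)-n=i$ for every $k$-dimensional $X$ whatsoever. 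Hence the only condition that is not automatic is the one at $i=\delta$, namely $\dim(X\cap A_\delta)=\dim(X\cap x)\ge\delta$, and therefore $\Omega(F(x,\delta,k))=\{X\in G_{\mathbb{F}}(k+1,n+1):\dim(X\cap x)\ge\delta\}$. Because this description does not refer to the intermediate $A_i$ at all, it is independent of how those are chosen inside and above $x$, which is precisely the assertion that \emph{any} flag of the prescribed shape may be used.

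It remains to intersect with the code. Since $C\subseteq\mathcal{V}$ and $\Omega(F)=W\cap\mathcal{V}$, we get $C\cap\Omega(F)=C\cap W=\{c\in\mathcal{C}:\dim(c\cap x)\ge\delta\}$; and because $\delta$ is taken to be the smallest expected value of $\dim(c\cap x)$, the transmitted codeword meets $x$ in dimension at least $\delta$ and generically exactly $\delta$, so this intersection is the desired set of candidates. The step I expect to be the main obstacle is the dimension bookkeeping at the exceptional step of the flag: one must check that the prescribed shape really does produce a strict flag for every $\delta$ in the admissible range, and make precise the passage between the Schubert inequality $\dim(X\cap x)\ge\delta$ and the statement that $c\cap x$ has the smallest expected dimension $\delta$, that is, that equality is the generic outcome for the transmitted codeword.
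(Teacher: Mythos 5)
Your proof is correct and follows essentially the same route as the paper: you show the Schubert conditions for $i>\delta$ hold automatically for every $k$-dimensional subspace by the dimension formula in $PG(n,\mathbb{F})$, that the conditions for $i<\delta$ follow from the single binding condition $\dim(X\cap x)\geq\delta$ by the dimension formula inside $x$, and hence that the variety is independent of the choice of flag. Your version merely makes the dimension bookkeeping and the admissible range of $\delta$ more explicit than the paper does, and correctly flags the same ``at least $\delta$'' versus ``exactly $\delta$'' reading of the statement that the paper resolves by interpreting the set as codewords meeting $x$ in dimension at least $\delta$.
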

\begin{proof}
Any flag with these characteristics will define the same Schubert variety, namely the Schubert variety in which the points represent the subspaces of $G_{\mathbb{F}}(k+1,n+1)$ intersecting $x$ in a subspace of dimension at least $\delta$. Indeed, if $y$ is a projective subspace of dimension $k$, then $y$ intersects any subspace $A_i$ of projective dimension $n-(k-i)$ in a subspace of dimension at least $i$. Note that, for $\delta<i\leq k$, the subspaces in $F(x,\delta,k)$ are defined so that $\dim(A_i)=n-(k-i)$. 
Of these subspaces $y$, the ones intersecting $x$ in a subspace of projective dimension less than $\delta$ are excluded by choosing $A_{\delta}=x$. Any subspace $y$ intersecting $x$ in a subspace of projective dimension at least $\delta$ will intersect any subspace $A_i\subsetneq x$ of dimension $\dim(A_i)=\dim(x)-(\delta-i)$ in a subspace of at least dimension $i$. Note that, for $0\leq i\leq \delta$, the subspaces in $F(x,\delta,k)$ are defined such that $\dim(A_i)=\dim(x)-(\delta-i)$.  
\end{proof}
Next we express $\delta$ in $F(x,\delta,k)$ in terms of the largest expected subspace distance between the received subspace $x$ and the sent subspace $c$, that is, in terms of the error-correcting capacity $t$ of the code.  
\begin{lemma}
If $c$ is a projective subspace of dimension $k$ and $x$ is a projective subspace at subspace distance from $c$ satisfying $d(x,c)\leq t$, then $\dim(c\cap x)\geq ((k-t)+\dim(x))/2$. In particular, if $k=t$, then $\dim(c\cap x)\geq \dim(x)/2$. 
\end{lemma}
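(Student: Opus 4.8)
The plan is to derive the inequality directly from the definition of the subspace distance, taking care to translate between the vector-space dimensions appearing in that definition and the projective dimensions appearing in the statement. Recall that the subspace distance between two subspaces was defined by $d(A,B)=\dim(A)+\dim(B)-2\dim(A\cap B)$, where $\dim$ denotes vector-space dimension. Since a projective subspace of projective dimension $m$ is the projectivization of a vector subspace of dimension $m+1$, I would first rewrite $d(x,c)$ entirely in terms of the projective dimensions $k$, $\dim(x)$ and $\dim(c\cap x)$.

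Writing the vector-space dimensions of the subspaces underlying $c$, $x$ and $c\cap x$ as $k+1$, $\dim(x)+1$ and $\dim(c\cap x)+1$ respectively, the definition gives
$$d(x,c)=(k+1)+(\dim(x)+1)-2(\dim(c\cap x)+1)=k+\dim(x)-2\dim(c\cap x),$$
the additive constants cancelling. In the degenerate case $c\cap x=\emptyset$ one uses the convention $\dim(c\cap x)=-1$, consistent with the empty projective intersection corresponding to the zero vector subspace. This identity is the only content-bearing step; once it is in place the rest is elementary.

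Substituting the hypothesis $d(x,c)\leq t$ into this identity yields $k+\dim(x)-2\dim(c\cap x)\leq t$, which on rearranging becomes $2\dim(c\cap x)\geq(k-t)+\dim(x)$, and hence $\dim(c\cap x)\geq((k-t)+\dim(x))/2$, as claimed. The special case $k=t$ then follows immediately, since setting $k=t$ annihilates the term $k-t$ and leaves $\dim(c\cap x)\geq\dim(x)/2$.

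The point to handle with care—more a matter of bookkeeping than a genuine obstacle—is that the statement mixes projective dimensions (the dimension $k$ of $c$, together with $\dim(x)$ and $\dim(c\cap x)$) with a distance defined through vector-space dimensions, and the whole result hinges on the cancellation of the three $+1$ shifts. I would therefore state the projective-to-vector-space conversion explicitly, so that no off-by-one ambiguity can creep into the final bound.
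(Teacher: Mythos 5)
Your proposal is correct and follows essentially the same route as the paper: substitute the definition $d(x,c)=\dim(c)+\dim(x)-2\dim(c\cap x)$ into the hypothesis $d(x,c)\leq t$ and rearrange. Your explicit check that the three $+1$ shifts between vector-space and projective dimensions cancel (with the convention $\dim(\emptyset)=-1$) is a welcome piece of bookkeeping that the paper leaves implicit, but it does not change the argument.
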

\begin{proof}
The subspace distance between $c$ and $x$ is $d(x,c)=\dim(c)+\dim(x)-2\dim(c\cap x)$. Therefore $d(x,c)\leq t$ implies $\dim(c)+\dim(x)-2\dim(c\cap x)\leq t$, so that $\dim(c \cap x)\geq (\dim(c)+\dim(x)-t)/2=(k-t+\dim(x))/2$. If $k=t$, then we get $\dim(c\cap x)\geq \dim(x)/2$. 
\end{proof}
The Schubert variety to use for decoding is therefore in the general case $\Omega(F(x,(k-t+\dim(x))/2,k))$ and for subspace codes in $G_{\mathbb{F}}(t+1,2t+2)$ correcting $t$ errors, like $t$-spread codes, it is  $\Omega(F(x, \dim(x)/2,t))$.  

Decoding requires a method for calculating $\Omega(F(x,\delta,k))$ explicitly. 
As mentioned before, a Schubert variety is the intersection of a projective subspace $\mathcal{W}$ and the image of $G_{\mathbb{F}}(k+1,n+1)$ under the Pl\"ucker embedding. 
The following is a description of $\mathcal{W}(F(x,\delta,k))$ which can be used in the design of the decoding algorithm.


\begin{lemma}
\label{lemma:schubertvariety}

Let 
\begin{itemize}
\item $\{e_0,\dots,e_n\}$ be a basis of $V=V(n+1,\mathbb{F})$,  
\item $x=\left\langle x_0,\dots,x_{\dim(x)}\right\rangle \subseteq V(n+1,\mathbb{F})$ be a vector subspace of dimension $\dim(x)\leq k$ when regarded as a projective space, and
\item $\delta \leq \dim(x)$. 
\end{itemize}
Then $\Omega(F(x,\delta,k)) \subseteq  PG\left(\binom{n+1}{k+1}-1, \mathbb{F} \right)$ can be calculated as $$\Omega(F(x,\delta,k))=\mathcal{W}(F(x,\delta,k))\cap \mathcal{V},$$ where 
\begin{itemize}
\item $\mathcal{V}$ is the Pl\"ucker embedding of the Grassmannian $G_{\mathbb{F}}(k+1,n+1)$ and 
\item $\mathcal{W}(F(x,\delta,k))$ is the projectivization of the vector space $W(F(x,\delta,k))=\left\langle w_{({\bf i},{\bf j})} \right\rangle_{({\bf i},{\bf j})}$, where $$w_{({\bf i},{\bf j})}=x_{i_{0}}\wedge \cdots \wedge x_{i_{\delta}}\wedge e_{j_{\delta+1}}\wedge \cdots \wedge e_{j_k},$$ and 
\begin{itemize}
\item ${\bf i}=(i_{0},\dots,i_{\delta})$ goes through all $(\delta+1)$-combinations of $\{0,\dots,\dim(x)\}$, and 
\item ${\bf j}=(j_{\delta+1},\dots,j_{k})$ goes through all $(k-\delta)$-combinations of $\{0,\dots,n\}$. 
\end{itemize}
\end{itemize}
\end{lemma}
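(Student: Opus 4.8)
The plan is to prove the set equality by turning membership in the linear space $W$ into a condition on Plücker coordinates, and then matching that condition with the intersection dimension that defines the Schubert variety. First I would note that the spanning set is less basis-dependent than it looks: since $\{x_{i_0}\wedge\cdots\wedge x_{i_\delta}\}_{\mathbf i}$ is a basis of $\bigwedge^{\delta+1}x$ and $\{e_{j_{\delta+1}}\wedge\cdots\wedge e_{j_k}\}_{\mathbf j}$ is a basis of $\bigwedge^{k-\delta}V$, the space $W(F(x,\delta,k))$ equals the image $\bigwedge^{\delta+1}x\wedge\bigwedge^{k-\delta}V$ of the wedge-product map, a subspace depending only on $x\subseteq V$ and not on the chosen generators. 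I may therefore recompute it in a basis $g_0,\dots,g_n$ adapted to $x$, with $x=\langle g_0,\dots,g_m\rangle$ where $m=\dim(x)$. Expanding, a product $g_{\mathbf i}\wedge g_{\mathbf j}$ is either $0$ (repeated index) or $\pm g_{\mathbf k}$ with $\mathbf k=\mathbf i\cup\mathbf j$, and then $\mathbf k$ contains the $(\delta+1)$-set $\mathbf i\subseteq\{0,\dots,m\}$; conversely every $(k+1)$-subset $\mathbf k$ with $|\mathbf k\cap\{0,\dots,m\}|\ge\delta+1$ arises this way. Hence $W$ is exactly the coordinate subspace $\langle g_{\mathbf k} : |\mathbf k\cap\{0,\dots,m\}|\ge\delta+1\rangle$, so for any $(k+1)$-dimensional $U$ the point $[p_U]$ lies in $\mathcal W$ if and only if $p_{\mathbf k}(U)=0$ for every $\mathbf k$ with $|\mathbf k\cap\{0,\dots,m\}|\le\delta$.

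The whole statement then reduces to the identity
$$\dim(U\cap x)=\min\{\,|\mathbf k\cap\{0,\dots,m\}| : p_{\mathbf k}(U)\ne 0\,\},$$
where dimensions are vector-space dimensions. Granting it, the equivalences close up: $[p_U]\in\mathcal W$ iff no nonzero coordinate has at most $\delta$ small indices, iff the minimum above is $\ge\delta+1$, iff $\dim(U\cap x)\ge\delta+1$, iff $\mathbb P(U)$ meets $x$ in projective dimension at least $\delta$. By the preceding lemma this last condition is precisely the one defining $\Omega(F(x,\delta,k))$. Since $\Omega(F)\subseteq\mathcal V$ and the points of $\mathcal V$ are exactly the classes $[p_U]$, this yields $\Omega(F(x,\delta,k))=\mathcal W\cap\mathcal V$.

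To prove the identity I would choose a basis $u_0,\dots,u_k$ of $U$ with $u_0,\dots,u_{r-1}$ a basis of $U\cap x$, where $r=\dim(U\cap x)$; these first $r$ rows are supported on $\{0,\dots,m\}$ and so vanish on every large column. For the inequality $\ge$: in any $(k+1)\times(k+1)$ submatrix indexed by a set $\mathbf k$ with $s:=|\mathbf k\cap\{0,\dots,m\}|<r$, the first $r$ rows lie in the $s$-dimensional coordinate space of the small columns, hence are linearly dependent and force $p_{\mathbf k}(U)=0$; thus every nonzero coordinate has $s\ge r$. For the reverse inequality I would exhibit a nonzero coordinate with $s=r$: pick $r$ small columns on which $u_0,\dots,u_{r-1}$ have a nonzero $r\times r$ minor (possible since they are independent and confined to the small block), and pick $k+1-r$ large columns on which the images of $u_r,\dots,u_k$ in $V/x$ — independent there — have a nonzero minor; the resulting submatrix is block lower-triangular, so its determinant is the product of the two minors and is nonzero. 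This construction is the heart of the argument and the step I expect to be the main obstacle, precisely because it is where decomposability of $p_U$ is indispensable: a generic element of $W$ is not a Plücker vector and the intersection interpretation fails for it, so one must genuinely use that $U$ is an honest subspace. As a cross-check, the forward inclusion $\Omega(F)\subseteq\mathcal W$ also has a coordinate-free proof — expanding $p_U=u_0\wedge\cdots\wedge u_k$ with $u_0,\dots,u_\delta\in x$ shows directly that $p_U\in\bigwedge^{\delta+1}x\wedge\bigwedge^{k-\delta}V=W$.
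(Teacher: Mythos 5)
Your proof is correct, and it differs from the paper's in a way worth recording: it actually proves the step that the paper only asserts. The first half of your argument coincides with the paper's: there, too, the given spanning set is replaced by the span of the image of the bilinear wedge map $f:\bigwedge^{\delta+1}x\times\bigwedge^{k-\delta}V\to\bigwedge^{k+1}V$ (the paper's ``$W=W'$'' step), which is your basis-independence remark. After that, however, the paper dispatches the crux in a single sentence --- ``Finally, we observe that $v$ is a totally decomposable vector of $W$ if and only if $v=v_0\wedge\cdots\wedge v_k$ with $\dim(\langle v_0,\dots,v_k\rangle\cap x)\geq\delta+1$'' --- offering no argument for either direction. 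That observation is exactly what you single out as the main obstacle, and your adapted-basis computation supplies the missing proof: in a basis $g_0,\dots,g_n$ with $x=\langle g_0,\dots,g_m\rangle$, the space $W$ becomes the coordinate subspace spanned by the $g_{\mathbf k}$ with $\lvert\mathbf k\cap\{0,\dots,m\}\rvert\geq\delta+1$, and the identity $\dim(U\cap x)=\min\{\lvert\mathbf k\cap\{0,\dots,m\}\rvert : p_{\mathbf k}(U)\neq 0\}$ follows from the vanishing-minor bound in one direction and the block-triangular submatrix construction in the other; the latter direction is the genuinely nontrivial one, since the easy direction amounts to your coordinate-free inclusion $p_U\in\bigwedge^{\delta+1}x\wedge\bigwedge^{k-\delta}V$ whenever $u_0,\dots,u_\delta\in x$. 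Beyond closing that gap, your route has a practical dividend: the coordinate description of $\mathcal W(F(x,\delta,k))$ in a basis adapted to $x$ is precisely what the paper re-derives separately for the special subspace $E_b$ when it constructs the defining equations (the conditions $X_{\mathbf i}=0$ for $\mathbf i\not\preceq(b-\delta,\dots,b,n-b,\dots,n)$) and the parametrization, so the explicit equations and parametrization used by the decoding algorithms fall out of your proof as by-products rather than as subsequent constructions.
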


\begin{proof}
A vector $v$ of $\bigwedge^{k+1}V$ is called \emph{totally decomposable} if it can be written as $v=v_0\wedge\cdots\wedge v_k$ for some vectors $v_0,\dots,v_k\in V$. The set of totally decomposable vectors of $\bigwedge^{k+1}V$ are exactly the points in the Pl\"ucker embedding $\mathcal{V}$ of the Grassmannian. 

Consider the restriction of the wedge product of $\bigwedge^{\delta+1}V$ and $\bigwedge^{k-\delta}V$ to $\bigwedge^{\delta+1}x$ and $\bigwedge^{k-\delta}V$. This wedge product is a bilinear map $f:\bigwedge^{\delta+1}x\times \bigwedge^{k-\delta}V \rightarrow \bigwedge^{k+1}V$ defined in terms of the basis vectors as $$f\left(x_{i_0}\wedge \cdots \wedge x_{i_{\delta}}, e_{j_{\delta +1}}\wedge \cdots \wedge e_{j_{k}}\right)=x_{i_0}\wedge \cdots \wedge x_{i_{\delta}} \wedge e_{j_{\delta+1}}\wedge \cdots \wedge e_{j_{k}}$$ for $(i_0,\dots,i_{\delta})$ in the $(\delta+1)$-combinations of $\{0,\dots,\dim(x)\}$ and $(j_{\delta+1},\dots,j_{k})$ in the $(k-\delta)$-combinations of $\{0,\dots,n\}$. 
Let $W$ be the subspace of $\bigwedge^{k+1}V$ spanned by the image of $f$, that is, $W=\left\langle f\left(\bigwedge^{\delta+1}x\times \bigwedge^{k-\delta}V\right)\right\rangle$. 
Denote by $W'$ the subspace of $\bigwedge^{k+1}V$ which is spanned by the image of the Cartesian product of the basis vectors of $\bigwedge^{\delta+1}x$ and $\bigwedge^{k-\delta}V$, namely $$W'=\left\langle f\left(\{ x_{i_0}\wedge \cdots\wedge x_{i_{\delta}}\}_{{\bf i}}\times \{e_{j_{\delta+1}}\wedge \cdots \wedge e_{j_{k}}\}_{{\bf j}}\right)\right\rangle.$$ 
We want to show that $W=W'$. It is clear that $W'\subseteq W$. 
To see that $W\subseteq W'$ it is enough to see that all elements in the image of $f$ belongs to $W'$. 
If $u$ belongs to the image of $f$, then $u=f\left(\sum_{\bf i}a_{\bf i}x_{i_0}\wedge \cdots \wedge x_{i_{\delta}}, \sum_{\bf j} b_{\bf j}e_{j_{\delta+1}}\wedge \cdots \wedge e_{j_{k}}\right)=\sum_{({\bf i},{\bf j})}a_{\bf i}b_{\bf j}x_{i_0}\wedge \cdots \wedge x_{i_{\delta}} \wedge e_{j_{\delta+1}}\wedge \cdots \wedge e_{j_{k}}$ for some $\sum_{\bf i}a_{\bf i}x_{i_0}\wedge \cdots \wedge x_{i_{\delta}}\in  \bigwedge^{\delta+1}x$ and some $\sum_{\bf j} b_{\bf j}e_{j_{\delta+1}}\wedge \cdots \wedge e_{j_{k}}\in \bigwedge^{k-\delta}V$, so $u$ is a linear combination of the elements in the set $f\left(\{ x_{i_0}\wedge \cdots\wedge x_{i_{\delta}}\}_{{\bf i}}\times \{e_{j_{\delta+1}}\wedge \cdots \wedge e_{j_{k}}\}_{{\bf j}}\right)$. So $u\in W'$, implying that $W\subseteq W'$. 

Finally, we observe that $v$ is a totally decomposable vector of $W$ if and only if $v=v_0\wedge \cdots \wedge v_k$, such that $\langle v_0,\dots,v_k\rangle \subseteq G_{\mathbb{F}}(k+1,n+1)$ and the vector space dimension satisfies $\dim(\langle v_0,\dots,v_k\rangle\cap x)\geq \delta+1$ (the projective dimension of the intersection is larger than $\delta$). 


To conclude, if $\mathcal{W}$ is the projectivization of $W$, then $\mathcal{V}\cap \mathcal{W}$ is the set of Pl\"ucker coordinates of the subspaces of $G_{\mathbb{F}}(k+1,n+1)$ intersecting $x$ in a subspace of projective dimension at least $\delta$. This is exactly the Schubert variety $\Omega(F(x,\delta,k))$. 
\end{proof}

\begin{lemma}
\label{lemma:dim}
The vector space dimension of $W(F(x,\delta,k))$ is $$\sum_{d=\delta}^{\dim(x)}\binom{\dim(x)+1}{d+1}\binom{n-\dim(x)}{k-d}.$$
\end{lemma}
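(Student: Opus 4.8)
The plan is to identify the abstract subspace $W(F(x,\delta,k))$ intrinsically and then read off its dimension from a grading induced by a splitting of $V$. By the argument in the proof of Lemma~\ref{lemma:schubertvariety} (the equality $W=W'$ established there), the span of the generators $w_{({\bf i},{\bf j})}$ coincides with the linear span of all products $\xi\wedge\eta$ with $\xi\in\bigwedge^{\delta+1}x$ and $\eta\in\bigwedge^{k-\delta}V$; I will write this as
$$W(F(x,\delta,k))=\textstyle\bigwedge^{\delta+1}x\wedge\bigwedge^{k-\delta}V.$$
Because the $e_j$ range over all of $\{0,\dots,n\}$, this description is manifestly independent of the chosen basis, so I am free to replace $\{e_0,\dots,e_n\}$ by a basis adapted to $x$.

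First I would fix a complement $V=x\oplus y$, where $y$ has vector-space dimension $n-\dim(x)$, and invoke the standard direct-sum decomposition of the exterior power induced by this splitting,
$$\textstyle\bigwedge^{k+1}V=\bigoplus_{a+b=k+1}\bigwedge^{a}x\wedge\bigwedge^{b}y,$$
graded by the degree $a$ in the $x$-factor, with $\dim\!\bigl(\bigwedge^{a}x\wedge\bigwedge^{b}y\bigr)=\binom{\dim(x)+1}{a}\binom{n-\dim(x)}{b}$. Decomposing $\bigwedge^{k-\delta}V$ the same way and wedging with $\bigwedge^{\delta+1}x$, every generator lands in an $x$-homogeneous component of degree $\delta+1+p\ge\delta+1$. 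Conversely, using the surjectivity of exterior multiplication $\bigwedge^{\delta+1}x\wedge\bigwedge^{p}x=\bigwedge^{\delta+1+p}x$ whenever $\delta+1+p\le\dim(x)+1$, every homogeneous component $\bigwedge^{a}x\wedge\bigwedge^{b}y$ with $a\ge\delta+1$ and $a+b=k+1$ is seen to lie in $W$. Together these two inclusions identify
$$W(F(x,\delta,k))=\bigoplus_{\substack{a+b=k+1\\ a\ge\delta+1}}\textstyle\bigwedge^{a}x\wedge\bigwedge^{b}y.$$

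Finally I would count dimensions. Since distinct values of $a$ correspond to distinct graded pieces, the sum is direct, so
$$\dim W(F(x,\delta,k))=\sum_{a=\delta+1}^{\dim(x)+1}\binom{\dim(x)+1}{a}\binom{n-\dim(x)}{k+1-a},$$
where the upper limit is $\dim(x)+1$ because $\bigwedge^{a}x=0$ for $a>\dim(x)+1$, and terms with $k+1-a<0$ or $k+1-a>n-\dim(x)$ vanish automatically. Substituting $a=d+1$ converts this into the claimed expression $\sum_{d=\delta}^{\dim(x)}\binom{\dim(x)+1}{d+1}\binom{n-\dim(x)}{k-d}$, the upper limit $d=\dim(x)$ being the binding one since $\dim(x)\le k$.

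The only genuinely nontrivial point is the second inclusion, namely that every high-degree graded piece actually lies in $W$; this rests on the surjectivity of exterior multiplication onto $\bigwedge^{\delta+1+p}x$, and everything else is the bookkeeping of a graded direct sum together with a reindexing. I expect the linear independence across components (directness of the sum) to be the step deserving a word of care, but it is immediate once $\bigwedge^{k+1}V$ is graded by $x$-degree.
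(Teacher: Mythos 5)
Your proof is correct and is essentially the paper's argument: the paper likewise passes to a basis adapted to $x$ (taking $x=\langle e_0,\dots,e_{\dim(x)}\rangle$ without loss of generality, i.e.\ fixing a complement spanned by the remaining basis vectors) and counts, for each exact intersection dimension $d+1$ with $\delta\le d\le\dim(x)$, the coordinate subspaces contributing $\binom{\dim(x)+1}{d+1}\binom{n-\dim(x)}{k-d}$ basis vectors of $W$, which is exactly your grading of $\bigwedge^{k+1}V$ by $x$-degree. The only difference is rigor: your basis-independence remark (justifying the normalization of $x$) and your two inclusions identifying $W$ with $\bigoplus_{a\ge\delta+1}\bigwedge^{a}x\wedge\bigwedge^{k+1-a}y$ --- in particular the surjectivity step showing each high-degree component actually lies in $W$ --- are left implicit in the paper's terser counting.
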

\begin{proof}
We may assume that $x=\langle e_0,\dots,e_{\dim(x)}\rangle$, where $\dim(x)$ is the projective dimension of $x$. For each vector space dimension $d+1\in \{\delta+1,\dots,\dim(x)+1\}$, the number of subspaces of $V(n+1,\mathbb{F})$ spanned by a subset of the basis vectors and intersecting $x$ in a subspace of dimension exactly $d+1$, is the number of subspaces of dimension $d+1$ of $x$ spanned by a subset of the basis vectors, times the number of subspaces of dimension $k-d$  spanned by a subset of the basis vectors of the orthogonal complement of $x$, that is, the subspace $x^{\perp}$ such that $x\oplus x^{\perp}=V(n+1,\mathbb{F})$), hence $\binom{\dim(x)+1}{d+1}\binom{n-\dim(x)}{k-d}$. 
\end{proof}

Below several methods for calculating the subspace $\mathcal{W}(F(x,\delta,k))$ defining $\Omega(F(x,\delta,k))$ are described.  
%
%
The most efficient of these methods are based on the idea to first describe $\mathcal{W}(F(E,\delta,k))$ for a particular  subspace $E$, and then use a base change to find a description of $\mathcal{W}(F(x,\delta,k))$. 

In $V(n+1,\mathbb{F})$, consider the subspace $E_{b}$ spanned by the $b+1$ first basis vectors. 
By Lemma \ref{lemma:schubertvariety} and \ref{lemma:dim}, the Schubert variety representing the set of subspaces of projective dimension $k$ intersecting $E_b$ in a subspace of projective dimension at least $\delta$ is spanned by the Pl\"ucker coordinates of the subspaces of dimension $k$ which are obtained by adding $k-i$ vectors from the orthogonal complement of $E_b$ to a subset of $i+1$ of the basis vectors of $E_b$ for $i\in \{\delta,\dots, b\}$. 

The space $G_{\mathbb{F}}(k+1,n+1)$ is a homogeneous space under the transitive action of the general linear group of invertible $(n+1)\times (n+1)$ matrices, $GL(n+1,\mathbb{F})$, representing the base changes in $V(n+1,\mathbb{F})$. 
Therefore any $x\in G_{\mathbb{F}}(b+1,n+1)$ can be obtained from $E_b$ through a change of basis defined by a matrix in $GL(n+1,\mathbb{F})$. 
As there are many bases for $x$ and $E_b$,  there are many changes of basis transforming $E_b$ to $x$. If we are given $x=\left\langle x_0,\dots,x_b\right\rangle$, then we may, for example, choose $B$ to be a change of basis transforming the first $b+1$ vectors $e_0,\dots,e_{b}$ of the standard basis of $V(n+1,\mathbb{F})$ to $x_0, \dots, x_b$. It is described in \cite{Rosenthal} how $B$ can be chosen to make calculations efficient: calculate the reduced row-echelon form of the matrix with row vectors a basis of $x$. Add $n-b$ new rows to this matrix by taking distinct vectors from the standard basis of $V(n+1,\mathbb{F})$ such that the columns under each pivot element are still zero. Now, if $M_{E_b}$ and $M_x$ are the matrices with row vectors the basis of $E_b$ and $x$ respectively, then $M_{E_b}B=M_{x}$, that is, $B$ gives the change of basis from a basis in which the $b+1$ first base vectors is a basis of $x$ (so that $x$ in that basis is $E_b$) to another basis in which the $b+1$ first base vectors are from the standard basis. 

A change of basis in $V(n+1,\mathbb{F})$ induces a change of basis in the exterior algebra $\bigwedge^{k+1}V(n+1,\mathbb{F})$. 
The new basis of $\bigwedge^{k+1}V(n+1,\mathbb{F})$ is the wedge product of the $(k+1)$-combinations of the new basis vectors of $V(n+1,\mathbb{F})$. 
Each wedge product implies the calculation of $\binom{n+1}{k+1}$ minors of dimension $(k+1)\times (k+1)$.   
The computational complexity for calculating the matrix for the change of basis in $\bigwedge^{k+1}V(n+1,\mathbb{F})$, given the matrix for the change of basis in $V(n+1,\mathbb{F})$, is therefore in the worst case of order $O\left(\binom{n+1}{k+1}^2(k+1)^3\right)$, since the complexity for calculating one minor of dimension $(k+1)\times(k+1)$ is $(k+1)^3$. 

\subsubsection{Parametrization}
\label{sec:param}
First we describe how to give a parametrization of $\mathcal{W}(F(x,\delta,k))$, where the projective dimension of $x$ is $b$, from a parametrization of $\mathcal{W}(F(E_b,\delta,k))$, where $E_b$ is the subspace spanned by the $b+1$ first basis vectors of $V(n+1,\mathbb{F})$. 
According to Lemma \ref{lemma:schubertvariety} and Lemma \ref{lemma:dim}, a  basis of $W(F(E_b,\delta,k))$ can be obtained by calculating $e_{i_0}\wedge \cdots \wedge e_{i_{d}}\wedge e_{j_{d+1}}\wedge \cdots \wedge e_{j_{k}}$ for ${\bf i}=(i_{0},\dots,i_{d})$ goes through all $(d+1)$-combinations of $\{0,\dots,b\}$, and ${\bf j}=(j_{d+1},\dots,j_{k})$ goes through all $(k-d)$-combinations of $\{b+1,\dots,n\}$, for $d\in \{\delta,\dots,b\}$. 
But the vectors in this basis of $W(F(E_b,\delta,k))$ are exactly the vectors in the standard basis of $\bigwedge^{k+1}V(n+1,\mathbb{F})$ with indices $({\bf i, j})=(i_{0},\dots,i_{d}, j_{d+1},\dots,j_{k})$, so no calculations are needed. A parametrization of $\mathcal{W}(F(E_b,\delta,k))$ is therefore $$p(\alpha)=\mathbb{P}\left(\sum_{\bf (i,j)} \alpha_{\bf (i,j)} e_{i_0}\wedge \cdots \wedge e_{i_{d}}\wedge e_{j_{d+1}}\wedge \cdots \wedge e_{j_{k}}\right),$$
where the parameters are $\alpha=(\alpha_{\bf (i,j)})$.  
To obtain a parametrization of $\mathcal{W}(F(x,\delta,k))$ from the parametrization of $\mathcal{W}(F(E_b,\delta,k))$, use the matrix $B$ for change of basis from the standard basis to a basis where the first $b+1$ vectors $e_0,\dots,e_{b}$ of the standard basis of $V(n+1,\mathbb{F})$ are transformed to $x_0, \dots, x_b$, where $\{x_0,\dots,x_b\}$ is  a basis of $x$. 
A basis of $W(F(x,\delta,k))$ is then simply obtained by choosing the columns of this matrix with indices $({\bf i, j})=(i_{0},\dots,i_{\delta}, j_{\delta+1},\dots,j_{k})$. 
No matrix multiplication is needed. 
Note that these multi-indices ${\bf (i,j)}=(i_{0},\dots,i_{d},j_{d+1},\dots,j_{k})$ are exactly the ones satisfying $(i_{0},\dots,i_{d},j_{d+1},\dots,j_{k}) \preceq (b-\delta, \dots, b, n-b, \dots, n)$, where $\preceq$ is the partial order defined so that $(a_1,\dots,a_m)\preceq (b_1,\dots,b_m)$ whenever $a_s\leq b_s$ for all $s\in\{1,\dots,m\}$.

\begin{algorithm}[H]
\caption{Given a subspace $x=\left\langle x_0,\dots,x_{\dim(x)}\right\rangle \subseteq V(n+1,\mathbb{F})$ of vector dimension $\dim(x)+1\leq k+1$ and a projective dimension $\delta \leq \dim(x) \leq k$, this algorithm calculates a basis of $W(F(x,\delta,k)) \subseteq  \bigwedge^{k+1}V(n+1,\mathbb{F})$.}
\label{alg:coord3}
\begin{algorithmic} 
\State Let $\{v_0,\dots,v_n\}$ be the row vectors of the matrix $B$ calculated as in \cite{Rosenthal} (as also described briefly above).
\For{the multi-indices ${\bf i}=(i_0,\dots,i_k)$ with $0<i_j<i_{j+1}<n$ such that ${\bf i} \preceq  (b-\delta, \dots, b, n-b, \dots, n)$} 
\State Calculate the vectors $v_{\bf i}=v_{i_0}\wedge \cdots \wedge v_{i_k}$. 
\EndFor
\State Return $\{v_{\bf i}\}$. 
\end{algorithmic}
\end{algorithm}
A parametrization of $\mathcal{W}(F(x,\delta,k))$ is now $p(\alpha)=\mathbb{P}\left(\sum_{\bf i} \alpha_{\bf i}v_{\bf i}\right), $
where the parameters are $\alpha=(\alpha_{\bf i})$.  

\begin{lemma}
\label{lem:comp1}
The complexity of Algorithm \ref{alg:coord3} is $$O\left(\binom{n+1}{k+1}(\delta+1)^{3}D\right),$$ where $D=\sum_{d=\delta}^{\dim(x)}\binom{\dim(x)+1}{d+1}\binom{n-\dim(x)}{k-d}$ is the dimension of $W$.
\end{lemma}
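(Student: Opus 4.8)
The plan is to bound the running time as the product of two factors: the number of wedge products the algorithm computes, and the cost of computing a single one. First I would settle the count. Algorithm \ref{alg:coord3} produces one vector $v_{\bf i}$ for each multi-index ${\bf i}=(i_0,\dots,i_k)$ satisfying ${\bf i}\preceq(b-\delta,\dots,b,n-b,\dots,n)$, and, as noted just before the algorithm, these multi-indices are exactly the ones indexing the spanning vectors $w_{({\bf i},{\bf j})}$ of Lemma \ref{lemma:schubertvariety} that form a basis of $W(F(x,\delta,k))$. Hence their number equals $\dim W(F(x,\delta,k))=D$ by Lemma \ref{lemma:dim}, and the total cost is $D$ times the cost of one wedge product.

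Next I would analyse the cost of a single wedge product. Each $v_{\bf i}=v_{i_0}\wedge\cdots\wedge v_{i_k}$ is an element of $\bigwedge^{k+1}V(n+1,\mathbb{F})$, so it has $\binom{n+1}{k+1}$ Pl\"ucker coordinates, each being the maximal minor on one choice of $k+1$ columns of the $(k+1)\times(n+1)$ matrix with rows $v_{i_0},\dots,v_{i_k}$. It therefore suffices to bound the cost of a single such minor and multiply by $\binom{n+1}{k+1}$.

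The key step is to reduce the size of each minor using the structure of the matrix $B$. As constructed in \cite{Rosenthal}, the last $n-b$ rows of $B$ are standard basis vectors, and, because the first $b+1$ rows are in reduced row-echelon form, their pivot columns carry an identity block. In every product $v_{\bf i}$ the factors with index exceeding $b$ are such standard basis vectors; performing Laplace expansion of a maximal minor along these sparse rows, together with the identity pivot columns supplied by the echelon form, forces all but $\delta+1$ of the rows and columns and leaves a determinant of the dense $x$-block of order $\delta+1$. Such a determinant is evaluated in $O\big((\delta+1)^3\big)$ operations, in accordance with the convention used earlier for the cost of an $s\times s$ minor. Multiplying the three factors gives $O\big(\binom{n+1}{k+1}(\delta+1)^3\big)$ per wedge product and $O\big(\binom{n+1}{k+1}(\delta+1)^3D\big)$ overall.

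The main obstacle is precisely this minor-size reduction. I must verify rigorously that the forced standard-basis rows and the identity pivot columns together strip each maximal minor down to an order-$(\delta+1)$ determinant. This is delicate for the basis vectors $v_{\bf i}$ that use more than $\delta+1$ factors from $x$, namely the terms with $d>\delta$ appearing in the sum for $D$ in Lemma \ref{lemma:dim}: there one must check that each additional factor drawn from $x$ still enters the minor through one of its identity pivot columns, so that the genuinely dense determinant that remains never exceeds order $\delta+1$. Confirming this claim about the echelon structure is the crux on which the stated exponent $(\delta+1)^3$, rather than a larger one in $b=\dim(x)$, depends.
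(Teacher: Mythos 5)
Your proposal follows the same route as the paper's own proof: count the $D$ wedge products, note that each has $\binom{n+1}{k+1}$ components, and reduce each component by Laplace expansion along the sparse standard-basis rows of $B$ to a small dense determinant. The gap is exactly the step you flag at the end, and it cannot be closed in the way you propose. For a basis vector $v_{\bf i}$ with $d+1>\delta+1$ factors taken from the first $b+1$ rows of $B$, expansion along the $k-d$ standard-basis rows leaves a $(d+1)\times(d+1)$ minor of the row-echelon block of $x$, and the pivot columns do not rescue you: a pivot column collapses the minor further only if it belongs to the selected column set, and nothing forces that. Concretely, take $n+1=6$, $k+1=3$, $\delta=0$, and $x=\langle x_0,x_1,x_2\rangle$ with reduced row-echelon form $[\,I_3\,|\,A\,]$, $A$ dense; the vector $x_0\wedge x_1\wedge x_2$ is among the $v_{\bf i}$ computed by Algorithm \ref{alg:coord3} (it is the term $d=\dim(x)=2$ in the sum for $D$), and its component on the column set $\{3,4,5\}$ is $\det A$, a genuinely dense $3\times 3$ determinant, of order $d+1=3$ rather than $\delta+1=1$. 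So under the cost model used (one cubic-cost determinant per component), what this argument actually proves is
$$O\left(\binom{n+1}{k+1}\sum_{d=\delta}^{\dim(x)}\binom{\dim(x)+1}{d+1}\binom{n-\dim(x)}{k-d}(d+1)^3\right)=O\left(\binom{n+1}{k+1}(\dim(x)+1)^{3}D\right),$$
with $(\dim(x)+1)^3$ in place of $(\delta+1)^3$.

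You should know that the paper's own proof has the same lacuna: it asserts that ``each component is calculated from $\delta+1$ of the $b+1$ first rows,'' which is true only for the basis vectors with $d=\delta$ and silently ignores those with $d>\delta$. So in isolating this step as the crux you have identified a real weakness of the published argument, not a defect peculiar to your attempt. The damage is limited in the intended application: in the decoding algorithms one takes $\delta=(k-t+\dim(x))/2$ with $t\leq k$, whence $\dim(x)+1\leq 2(\delta+1)$ and the two bounds agree up to the constant factor $8$, which the $O(\cdot)$ absorbs. But for Algorithm \ref{alg:coord3} with arbitrary $\delta\leq\dim(x)$ (say $\delta=0$ and $\dim(x)$ large), the exponent $(\delta+1)^3$ is not justified, and neither your argument nor the paper's establishes it; the defensible statement replaces $(\delta+1)$ by $(\dim(x)+1)$, or keeps the factor $(d+1)^3$ inside the sum.
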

\begin{proof}
The algorithm implies the calculation of $$\dim(W)=\sum_{d=\delta}^{\dim(x)}\binom{\dim(x)+1}{d+1}\binom{n-\dim(x)}{k-d}$$ vectors in $\bigwedge^{k+1}V(n+1,\mathbb{F})$. 
Each vector has $\binom{n+1}{k+1}$ components, and each component is calculated from $\delta+1$ of the $b+1$ first rows and $n-\delta$ of the $n-b$ last rows of the matrix $B$. The $n-b$ last rows are vectors from the standard basis of $V(n+1,\mathbb{F})$. 
Therefore, by Laplace expansion along the $n-b$ last rows of the matrix with rows $\{v_0,\dots,v_n\}$, each of these components requires the calculation of a determinant of a matrix of dimension at most $(\delta+1)\times(\delta+1)$. 
The complexity of calculating the determinant of a $(\delta+1)\times(\delta+1)$-matrix is $O\left((\delta+1)^{3}\right)$. 
Therefore the overall complexity is $$O\left(\binom{n+1}{k+1}(\delta+1)^{3}\sum_{d=\delta}^{\dim(x)}\binom{\dim(x)+1}{d+1}\binom{n-\dim(x)}{k-d}\right).$$ 
\end{proof}

It is also possible to calculate a parametrization of $\mathcal{W}(F(x,\delta,k))$ by a direct application of Lemma \ref{lemma:schubertvariety}. 
Because in this case it cannot be assumed that a basis of $x^{\perp}$ is known, the number of Pl\"ucker coordinates to calculate can only be bounded by $\binom{\dim(x)+1}{\delta+1}\binom{n}{k-\delta}$.
Therefore a deterministic implementation of the resulting algorithm will in general  have higher complexity than Algorithm \ref{alg:coord3}.  
However the complexity can be reduced to a complexity comparable to Algorithm \ref{alg:coord3} by making the algorithm probabilistic. 
Given a set of linearly independent vectors in a vector space of dimension $n$, the probability that a randomly chosen vector is not contained in the span of these vectors is large as long as the dimension of the span is smaller than $n$ and the distribution is well-chosen. So with high probability, only $D=\sum_{d=\delta}^{\dim(x)}\binom{\dim(x)+1}{d+1}\binom{n-\dim(x)}{k-d}$ Pl\"ucker coordinates have to be calculated. However, since the complexity is of the same order as the complexity of Algorithm \ref{alg:coord3}, and the latter is deterministic, this probabilistic algorithm is less interesting and is not described in detail here. 

\subsubsection{Equations}
We now describe how to find the linear equations defining $\mathcal{W}(F(x,\delta,k))$ as a linear projective variety in $PG\left(\binom{n+1}{k+1}-1,\mathbb{F}\right)$. 
Two different methods are described, both intimately related to the method described in Section \ref{sec:param}.

The first method is essentially the same as the one described in \cite{Rosenthal2}, but it is included here for the sake of completeness, together with a calculation of its complexity. It is the dual of the method for finding a parametrization described in Section \ref{sec:param}. 



Consider the subspace $E_b$ as in Section \ref{sec:param}. 
As there, the Schubert variety representing the set of subspaces of projective dimension $k$ intersecting $E_b$ in a subspace of projective dimension at least $\delta$ is spanned by the Pl\"ucker coordinates of the subspaces of dimension $k$ which are obtained by adding $k-d$ vectors from the orthogonal complement of $E_b$ to a subset of $d+1$ of the basis vectors of $E_b$ for $d\in \{\delta,\dots, b\}$.  
The Pl\"ucker coordinates $(p_{{\bf i}_0}:\cdots:p_{{\bf i}_N})$ of these subspaces can only be (possibly) non-zero at the positions indexed by ${\bf i}=(i_0,\dots,i_n) \preceq (b-\delta, \dots, b, n-b, \dots, n)$, where, again, $\preceq$ is the partial order defined so that $(a_1,\dots,a_m)\preceq (b_1,\dots,b_m)$ whenever $a_s\leq b_s$ for all $s\in\{1,\dots,m\}$.  
At all other positions these Pl\"ucker coordinates are zero. 
This implies that $\mathcal{W}(F(E_b,\delta,k))$, defining the Schubert variety containing the points representing subspaces of vector dimension $k+1$ containing $E_a$ is defined by the equations 
$X_{\bf i}=0$, such that ${\bf i} \not \preceq  (b-\delta, \dots, b, n-k+\delta+1, \dots, n)$. 

\begin{lemma}
The number of equations defining $\mathcal{W}(F(E_b,\delta,k))$ is $$\binom{n+1}{k+1}-\sum_{d=\delta}^{b}\binom{b+1}{d+1}\binom{n-b}{k-d}.$$  
\end{lemma}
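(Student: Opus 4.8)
The plan is to count directly the coordinate equations $X_{\bf i}=0$ that cut out $\mathcal{W}(F(E_b,\delta,k))$. As established just above, this space is the coordinate subspace of $PG\left(\binom{n+1}{k+1}-1,\mathbb{F}\right)$ on which $X_{\bf i}$ is allowed to be nonzero exactly when ${\bf i}\preceq (b-\delta,\dots,b,n-k+\delta+1,\dots,n)$, the remaining coordinates being forced to vanish. Since these vanishing conditions are coordinate equations, they are linearly independent, so the number of defining equations equals the number of multi-indices ${\bf i}$ with ${\bf i}\not\preceq (b-\delta,\dots,b,n-k+\delta+1,\dots,n)$. By complementation in the set of all $\binom{n+1}{k+1}$ multi-indices, it therefore suffices to count the \emph{free} indices ${\bf i}\preceq (b-\delta,\dots,b,n-k+\delta+1,\dots,n)$ and subtract.

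First I would translate the partial-order bound into a transparent combinatorial condition. For a strictly increasing multi-index ${\bf i}=(i_0,\dots,i_k)$ I claim that ${\bf i}\preceq (b-\delta,\dots,b,n-k+\delta+1,\dots,n)$ holds if and only if $i_\delta\leq b$, equivalently if and only if at least $\delta+1$ of the entries of ${\bf i}$ lie in $\{0,\dots,b\}$. The forward implication is immediate, since the entry in position $\delta$ of the bound is $b$. For the converse, assume $i_\delta\leq b$; using $i_0<\cdots<i_\delta$ gives $i_s\leq b-\delta+s$ for $s\leq\delta$, and using $i_s<\cdots<i_k\leq n$ gives $i_s\leq n-k+s$ for $s>\delta$, which are precisely the coordinatewise inequalities defining $\preceq$.

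Next I would carry out the count of the free indices by sorting them according to the exact number of entries lying in $\{0,\dots,b\}$. If exactly $d+1$ entries lie in $\{0,\dots,b\}$, they may be chosen in $\binom{b+1}{d+1}$ ways from the $b+1$ available indices, and the remaining $k-d$ entries are chosen from $\{b+1,\dots,n\}$ in $\binom{n-b}{k-d}$ ways. The requirement of at least $\delta+1$ entries in $\{0,\dots,b\}$, together with the fact that at most $b+1$ entries can lie there, gives $d\in\{\delta,\dots,b\}$, so the number of free indices is $\sum_{d=\delta}^{b}\binom{b+1}{d+1}\binom{n-b}{k-d}$, in agreement with Lemma \ref{lemma:dim}. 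Subtracting this from $\binom{n+1}{k+1}$ yields the claimed count of equations.

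The computations here are entirely routine; the only point requiring care is the equivalence between the partial-order bound and the condition $i_\delta\leq b$, where one must use the strict monotonicity of the multi-index to recover the full list of coordinatewise inequalities from the single constraint on position $\delta$. I note that the statement can alternatively be read off immediately from Lemma \ref{lemma:dim}: since $\mathcal{W}(F(E_b,\delta,k))$ is a coordinate subspace, the number of defining coordinate equations is its codimension $\binom{n+1}{k+1}-\dim W(F(E_b,\delta,k))$, and $\dim W(F(E_b,\delta,k))=\sum_{d=\delta}^{b}\binom{b+1}{d+1}\binom{n-b}{k-d}$ by that lemma with $\dim(x)=b$.
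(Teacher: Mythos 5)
Your proposal is correct, and in fact it ends with exactly the paper's proof: the paper disposes of the lemma in one line, observing that the number of equations is the codimension $\binom{n+1}{k+1}-D$, where $D=\dim W(F(E_b,\delta,k))$ is given by Lemma \ref{lemma:dim}. The main body of your argument takes a more self-contained route: instead of citing Lemma \ref{lemma:dim}, you recount the free coordinates directly on multi-indices, first proving that ${\bf i}\preceq (b-\delta,\dots,b,n-k+\delta+1,\dots,n)$ is equivalent to $i_\delta\leq b$, i.e.\ to having at least $\delta+1$ entries in $\{0,\dots,b\}$, and then stratifying by the exact number $d+1$ of such entries to get $\sum_{d=\delta}^{b}\binom{b+1}{d+1}\binom{n-b}{k-d}$. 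What this buys is twofold: it makes explicit the linear independence of the coordinate equations $X_{\bf i}=0$ (so ``number of equations'' really is the codimension), and it pins down the correct bound tuple --- note the paper is inconsistent on this point, writing the tail as $(n-b,\dots,n)$ in one place and $(n-k+\delta+1,\dots,n)$ in another; your equivalence argument shows the latter is the right one in general. The cost is redundancy: your stratified count essentially re-proves Lemma \ref{lemma:dim} in the special case $x=E_b$, which the paper avoids by reusing that lemma.
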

\begin{proof}
This number is $\binom{n+1}{k+1}-D$ where $D$ is the vector space dimension of $W(F(E_b,\delta,k))$, calculated in Lemma \ref{lemma:dim}. 
\end{proof}

Consider the vectors with components the coefficients of the equations $X_{\bf i}=0$ defining $\mathcal{W}(F(E_b,\delta,k))$.
Then each of these vectors is a vector of the standard basis for the dual space of the exterior algebra, $\left(\bigwedge^{k+1}V(n+1,\mathbb{F})\right)^*$. 
The coefficients of the equations defining $\mathcal{W}(F(x,\delta,k))$ can be obtained by a base change of $\left(\bigwedge^{k+1}V(n+1,\mathbb{F})\right)^*$ induced by a base change of $V(n+1,\mathbb{F})$ from a basis in which $x$ is written as $E_b$ to the standard basis.  
Again, no matrix multiplication is needed, only the calculation of $\binom{n+1}{k+1}-\sum_{d=\delta}^{b}\binom{b+1}{d+1}\binom{n-b}{k-d}$ of the basis vectors of the new basis of $\left(\bigwedge^{k+1}V(n+1,\mathbb{F})\right)^*$. 
\begin{algorithm}[H]
\caption{Given a subspace $x=\left\langle x_0,\dots,x_{\dim(x)}\right\rangle \subseteq V(n+1,\mathbb{F})$ of vector dimension $\dim(x)+1\leq k+1$ and a projective dimension $\delta \leq \dim(x) \leq k$, this algorithm returns the equations defining $\mathcal{W}(F(x,\delta,k))$.}
\label{alg:impl2}
\begin{algorithmic} 
\State Let $\{v_0,\dots,v_n\}$ be the row vectors of the matrix $B$ calculated as in \cite{Rosenthal} (as also described briefly above).
\For{the multi-indices ${\bf i}=(i_0,\dots,i_k)$ with $0<i_j<i_{j+1}<n$ such that ${\bf i} \not \preceq  (b-\delta, \dots, b, n-b, \dots, n)$} 
\State Calculate the vectors $v_{\bf i}=v_{i_0}\wedge \cdots \wedge v_{i_k}$. 
\EndFor
\State Return $\{v_{\bf i}\}$ as the coefficient vectors of the linear equations $v_{\bf i}X^t=0$, where $X=(X_0,\dots,X_{\binom{n+1}{k+1}-1})$. 
\end{algorithmic}
\end{algorithm}

\begin{lemma}
The computational complexity for calculating these vectors is $$O\left(\binom{n+1}{k+1}(\delta+1)^{3}\left(\binom{n+1}{k+1}-D\right)\right),$$ where $D=\sum_{d=\delta}^{b}\binom{b+1}{d+1}\binom{n-b}{k-d}$ is the dimension of $W(F(x,\delta,k))$.
\end{lemma}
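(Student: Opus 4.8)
The plan is to follow the complexity analysis of Lemma \ref{lem:comp1} essentially verbatim, since Algorithm \ref{alg:impl2} is the dual counterpart of Algorithm \ref{alg:coord3}: it performs exactly the same wedge-product computations, only ranging over the complementary set of multi-indices. First I would count the vectors produced. A base change of $V(n+1,\mathbb{F})$ induces a linear isomorphism of $PG\left(\binom{n+1}{k+1}-1,\mathbb{F}\right)$ carrying $\mathcal{W}(F(E_b,\delta,k))$ to $\mathcal{W}(F(x,\delta,k))$, so the two are cut out by the same number of linear equations, namely $\binom{n+1}{k+1}-D$, where $D=\sum_{d=\delta}^{b}\binom{b+1}{d+1}\binom{n-b}{k-d}=\dim W(F(x,\delta,k))$ by Lemma \ref{lemma:dim}. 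The loop of Algorithm \ref{alg:impl2} runs precisely over the $\binom{n+1}{k+1}-D$ multi-indices ${\bf i}\not\preceq(b-\delta,\dots,b,n-b,\dots,n)$, so exactly $\binom{n+1}{k+1}-D$ vectors $v_{\bf i}$ are computed.

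Next I would bound the cost of a single vector $v_{\bf i}=v_{i_0}\wedge\cdots\wedge v_{i_k}$. Each $v_{\bf i}$ lies in $\bigwedge^{k+1}V(n+1,\mathbb{F})$ and thus has $\binom{n+1}{k+1}$ coordinates, each of which is a $(k+1)\times(k+1)$ minor of the matrix $B$. The decisive point, identical to the one used in Lemma \ref{lem:comp1}, is that by the construction of $B$ recalled in Section \ref{sec:explicit}, its last $n-b$ rows are vectors of the standard basis of $V(n+1,\mathbb{F})$ and hence have a single nonzero entry each. Expanding each minor by Laplace along these sparse rows collapses it to the determinant of a matrix of size at most $(\delta+1)\times(\delta+1)$ built from the first $b+1$ rows of $B$; evaluating such a determinant costs $O((\delta+1)^3)$. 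Multiplying the three factors then gives the asserted bound $O\left(\binom{n+1}{k+1}(\delta+1)^3\left(\binom{n+1}{k+1}-D\right)\right)$.

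The step requiring the most care is the Laplace reduction to determinants of size at most $(\delta+1)\times(\delta+1)$. It rests on two ingredients: the explicit row structure of $B$, and Lemma \ref{lemma:2a}, which is what makes it legitimate to read the equation coefficients for $x$ directly off wedges of the rows of $B$ rather than off the inverse base change. I would check that the argument transfers unchanged from the index set ${\bf i}\preceq(\dots)$ of Algorithm \ref{alg:coord3} to the complementary set ${\bf i}\not\preceq(\dots)$ used here; because the row structure of $B$ and the per-coordinate expansion are the same in both cases, the only genuinely new ingredient is the count of the complementary multi-indices, which is supplied by the preceding lemma on the number of defining equations.
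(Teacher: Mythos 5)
Your proposal is correct and follows the paper's own route: the paper proves this lemma precisely by declaring it ``analogous to the proof of Lemma~\ref{lem:comp1}'', with the only change being that $\binom{n+1}{k+1}-D$ vectors (the complementary multi-indices) are computed instead of $D$. Your additional check that the Laplace-expansion bound still applies on the complementary index set (where in fact at most $\delta$ dense rows occur, so the determinants are even smaller) is a detail the paper leaves implicit, but it is the same argument, not a different one.
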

\begin{proof}
Analogous to the proof of Lemma \ref{lem:comp1}, with the difference that this algorithm requires the calculation of $\binom{n+1}{k+1}-\dim(W)=\binom{n+1}{k+1}-D$ vectors in $\bigwedge^{k+1}V(n+1,\mathbb{F})$.  
\end{proof}

Given a parametrization of an algebraic variety, it is always possible to find equations defining the variety. This process is called \emph{implicitization}. 
The variety we are considering, $\mathcal{W}(F(x,\delta,k))$, is a smooth linear variety. Therefore, given equations defining it, it is always possible to find a (global) parametrization. 
This gives methods for finding equations and a parametrization of $\mathcal{W}(F(x,\delta,k))$, defined in terms of methods for finding a parametrization and equations, respectively. 
In particular, we obtain this second method for finding the equations defining $\mathcal{W}(F(x,\delta,k))$. 
 
\begin{algorithm}[H]
\caption{Given a subspace $x=\left\langle x_0,\dots,x_{\dim(x)}\right\rangle \subseteq V(n+1,\mathbb{F})$ of vector dimension $\dim(x)+1\leq k+1$ and a projective dimension $\delta \leq \dim(x) \leq k$, this algorithm returns the equations defining $\mathcal{W}(F(x,\delta,k))$.}
\label{alg:impl}
\begin{algorithmic} 
\State Use Algorithm \ref{alg:coord3} to calculate a parametrization of  $\mathcal{W}(F(x,\delta,k))$. 
\State Calculate and return the implicitation of this parametrization.  
\end{algorithmic}
\end{algorithm}

\begin{lemma}
The order of complexity of Algorithm \ref{alg:impl} is dominated by the order of complexity of Algorithm \ref{alg:coord3}.
\end{lemma}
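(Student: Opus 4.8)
The plan is to view Algorithm~\ref{alg:impl} as two consecutive steps and to reduce the lemma to a single inequality between their costs. The first step is a call to Algorithm~\ref{alg:coord3}, which returns a basis $\{v_{\bf i}\}_{{\bf i}\in S}$ of $W(F(x,\delta,k))$, where $S$ is the set of multi-indices with ${\bf i}\preceq(b-\delta,\dots,b,n-b,\dots,n)$ and $|S|=D$ with $D$ as in Lemma~\ref{lemma:dim}; its cost $T_1=O\!\left(\binom{n+1}{k+1}(\delta+1)^{3}D\right)$ is recorded in Lemma~\ref{lem:comp1}. The second step implicitizes this parametrization, producing the $\binom{n+1}{k+1}-D$ linear equations that cut out $\mathcal{W}(F(x,\delta,k))$, at some cost $T_2$. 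Since the total cost is $T_1+T_2$, the lemma is exactly the statement $T_2=O(T_1)$, and this is the only thing I would have to prove.

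The heart of the argument is to make $T_2$ concrete by exploiting that $\mathcal{W}(F(x,\delta,k))$ is linear. The $v_{\bf i}$ are, by construction, the $D$ columns of the invertible matrix $\tilde B=\bigwedge^{k+1}B$ that the base change $B$ induces on $\bigwedge^{k+1}V$, indexed by $S$. Hence a basis of the space of equations is given by the $\binom{n+1}{k+1}-D$ rows of $\tilde B^{-1}$ indexed by the complement of $S$: for ${\bf j}\notin S$ the identity $\tilde B^{-1}\tilde B=I$ forces row ${\bf j}$ of $\tilde B^{-1}$ to annihilate every column ${\bf i}\in S$, hence to vanish on $W$, and these rows are independent and of the right number $\binom{n+1}{k+1}-D=\dim W^{0}$. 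The key observation is that $\bigwedge^{k+1}$ is functorial on isomorphisms, so $\tilde B^{-1}=\bigwedge^{k+1}(B^{-1})$ (compare Lemma~\ref{lemma:2a}); therefore each required equation is again a $(k+1)$-fold wedge of rows, now of $B^{-1}$ instead of $B$. Computing $B^{-1}$ costs $O((n+1)^{3})$, which is negligible, and the implicitization is thereby reduced to the very same wedge-product routine that Algorithm~\ref{alg:coord3} runs, so its cost can be estimated by the analysis of Lemma~\ref{lem:comp1} applied to $B^{-1}$. In particular a generic Gaussian elimination on the $D\times\binom{n+1}{k+1}$ parametrization matrix, which would cost $O\!\left(D^{2}\binom{n+1}{k+1}\right)$ and could exceed $T_1$, is avoided.

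The main obstacle is to turn ``the same routine'' into the precise bound $T_2=O(T_1)$, and it splits into two points. First, a row of $\bigwedge^{k+1}(B^{-1})$ is a priori a full $(k+1)\times(k+1)$ minor of $B^{-1}$, costing $(k+1)^{3}$, rather than the $(\delta+1)\times(\delta+1)$ minor that produced the factor $(\delta+1)^{3}$ in $T_1$; to recover that factor I must show that $B^{-1}$ inherits enough of the block structure of the Rosenthal base change $B$ (namely that it again has $n-b$ essentially standard-basis rows) so that Laplace expansion collapses each such minor to size $\delta+1$. Second, the equation computation evaluates $\binom{n+1}{k+1}-D$ wedges rather than the $D$ of Algorithm~\ref{alg:coord3}, so I would finish by comparing these two counts, using the explicit expression for $D$ from Lemma~\ref{lemma:dim} (or, more crudely, bounding both by $\binom{n+1}{k+1}$) to conclude that the orders coincide. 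Establishing the structure of $B^{-1}$ in the first point is where the real work lies.
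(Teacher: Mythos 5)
Your overall framing is fine (Algorithm~\ref{alg:impl} is Algorithm~\ref{alg:coord3} at cost $T_1$ plus an implicitization step at cost $T_2$, and the lemma is $T_2=O(T_1)$), and the structural ingredients you invoke are sound: by functoriality of exterior powers the annihilator of $W(F(x,\delta,k))$ is indeed spanned by the rows of $\bigwedge^{k+1}(B^{-1})$ indexed by the complement of $S$, and your first worry is resolvable, since up to a column permutation the Rosenthal matrix $B$ is block unit-triangular, hence $B^{-1}$ is as well and again has $n-b$ standard-basis rows, so the $(\delta+1)^3$ factor survives. But notice that what this constructs is exactly the paper's Algorithm~\ref{alg:impl2}: you are re-running the wedge machinery over the \emph{complementary} index set.

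That is where the proposal breaks. Your implicitization computes $\binom{n+1}{k+1}-D$ wedges, so it gives $T_2=O\left(\binom{n+1}{k+1}(\delta+1)^{3}\left(\binom{n+1}{k+1}-D\right)\right)$, which is the complexity the paper assigns to Algorithm~\ref{alg:impl2} in a separate lemma, whereas the statement to be proved requires the bound with $D$ in place of $\binom{n+1}{k+1}-D$. These two counts are \emph{not} of the same order: taking $t=1$ and $\dim(x)=\delta=k-1$ (the one-error case), Lemma~\ref{lemma:dim} gives $D=n-k+1$, linear in $n$, while $\binom{n+1}{k+1}-D$ grows like $n^{k+1}$ for fixed $k$, so $T_2/T_1$ is unbounded. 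The fallback of bounding both counts by $\binom{n+1}{k+1}$ only produces a common upper bound for the two algorithms; it does not show that the cost of Algorithm~\ref{alg:impl} is dominated by that of Algorithm~\ref{alg:coord3}, which must hold input-by-input because $D$ depends on $x$ and $\delta$. Note also the trade-off: your (legitimate) worry that generic Gaussian elimination at $O\left(D^{2}\binom{n+1}{k+1}\right)$ could exceed $T_1$ bites when $D$ is large, but your replacement is worse precisely when $D$ is small, so switching methods does not achieve uniform domination. The paper's own proof takes a different route at exactly this point: it does not redo wedge computations at all, but treats the second step as generic linear algebra --- implicitizing a linear parametrization (computing the annihilator of the $D$ known spanning vectors) is Gaussian elimination, ``of polynomial order'' --- and asserts on that basis that the first step dominates.
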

\begin{proof}
The complexity of implicitation of linear equations is of polynomial order. 
\end{proof}

Similarly there is an algorithm which finds a parametrization of $\mathcal{W}(F(x,\delta,k))$ from the linear equations defining it, by solving the linear system of equations.

\subsection{A decoder of geometric subspace codes} 
\label{sec:decoder}
In Section \ref{sec:explicit} we announced the existence of an algorithm decoding geometric subspace codes by intersecting a Schubert variety  with the code variety. 
This algorithm was first described in \cite{Rosenthal, Rosenthal2}. 
Here new versions of this algorithm are presented, distinct from each other and from previous versions. 

The first version of the geometric decoding algorithm uses the parametrization of $\mathcal{W}(F(x,\delta,k))$ and the equations defining $\mathcal{C}$. 
Although not strictly necessary, this version of the algorithm is described here so that it requires the code to be a geometric subspace code in $G_{\mathbb{F}}(k+1,2k+2)$. This makes the algorithm dualize nicely. The restriction can be removed without much trouble. It is not present in Algorithm \ref{alg:4b} and Algorithm \ref{alg:4c}. 



\begin{algorithm}[H]
\caption{Let $\mathcal{C}$ be a geometric subspace code of constant (projective) dimension $k$ in $PG(2k+1,\mathbb{F})$, defined as $\mathcal{C}=\mathcal{V}\cap \mathcal{U}$, where $\mathcal{U}\subseteq PG\left(\binom{2k+2}{k+1}-1,\mathbb{F}\right)$ is the algebraic variety defining $\mathcal{C}$ and $\mathcal{V}$ is the Pl\"ucker embedding of the Grassmannian. 
Assume that the Pl\"ucker coordinates of $\mathcal{C}$ are defined by a set of equations $E_{\mathcal{C}}(X)$, where $X=(X_{\bf i})$ is a vector of variables of dimension $\binom{2k+2}{k+1}$.  
Given a subspace $x\subseteq PG(2k+1,\mathbb{F})$, this algorithm calculates the subspaces $c\in \mathcal{C}$ with $d(x,c)\leq t$.}
\label{alg:4a}
\begin{algorithmic}
\State Use Gaussian elimination to calculate the dimension of $x$. If $\dim(x)>k$, dualize. 
\State Use Algorithm \ref{alg:coord3} to calculate a basis $\{ v_{\bf i} \}$ of $W:=W(F(x,(k-t+\dim(x))/2,k))$. 
\State Define the linear parametrization $p(\alpha)=\mathbb{P}\left(\sum_{\bf i}\alpha_{\bf i}w_{\bf i}\right)$ of $\mathcal{W}=\mathbb{P}(W)$. 
\State Return $\mathcal{C} \cap \mathcal{W}$ as the solution of $E_{\mathcal{C}}(p(\alpha))$. 
\end{algorithmic}
\end{algorithm}

In this algorithm, $c$ is calculated as a solution of the system of equations consisting of the linear equations defining $\mathcal{U}$ together with the set of equations defining $\mathcal{V}$, applied to the linear parametrization $p(\alpha)$ of $\mathcal{W}(F(x,(k-t+\dim(x))/2,k))$. 

Note that Algorithm \ref{alg:0} in Section \ref{section:2} is a particular case of Algorithm \ref{alg:4a}. 
Therefore it is worth noting that in general, if $\dim(x)=k-1=\delta$, then, by Lemma \ref{lemma:3}, the projective variety $\mathcal{W}(F(x,\dim(x),k))$ is contained in the Pl\"ucker embedding of the Grassmannian $\mathcal{V}$.  
Also when $\delta<k-1$, if it is known that the sent subspace $c$ is contained in a given $(k+1)$-dimensional projective subspace $y$, then by restricting to points representing subspaces contained in $y$, by Lemma \ref{lemma:3}, the resulting subspace is contained in $\mathcal{V}$. 
In both these cases intersection with $\mathcal{V}$ is superfluous. 
This is interesting, since it allows for a solution similar to Algorithm \ref{alg:0}, where all calculations where made without Pl\"ucker coordinates. 
The first case corresponds to codes correcting one error. The second case could be useful for example over channels where the same codeword is sent more than once, by considering the smallest subspace containing all the received subspaces corresponding to the same sent codeword.

The second version of the geometric decoding algorithm uses the equations defining $\mathcal{W}(F(x,\delta,k))$ and $\mathcal{C}$. 
It is the version of the algorithm which most resembles the algorithm in \cite{Rosenthal2}, but it is substantially improved compared to \cite{Rosenthal2} by removing the high complexity caused by the need to solve a system of quadratic equations. Currently the best method for solving systems of non-linear equations requires the calculation of a Gr\"obner basis of the ideal of the polynomials defining the equations. This method can be regarded as a generalization to non-linear equations of Gaussian elimination of linear equations. 
The worst case complexity of the calculation of a Gr\"obner basis is exponential, but in the case of linear equations the algorithm reduces to Gaussian elimination, which has only cubic complexity. 

The improvement of Algorithm \ref{alg:4b} compared to the algorithm in \cite{Rosenthal2}  is due to the observation that the Gr\"obner basis of the polynomial ideal defining the code variety can be precalculated. 
Decoding can then be done by adding the linear equations defining $\mathcal{W}(F(x,\delta,k))$ to this Gr\"obner basis. 
The fact that all polynomials involved are homogeneous implies that the algorithm for adding these linear equations to the Gr\"obner basis reduces to Gaussian elimination. More precisely, we have the following result. 
\begin{lemma}
Let $K$ be a field and let $I$ be a homogeneous ideal of $K[X_1,\dots,X_n]$. For $i\in\mathbb{N}\cup\{0\}$, let $I_i\subseteq I$ be the ideal generated by the homogeneous elements of $I$ of degree $i$ and let $G_i$ be a Gr\"obner basis of $I_i$. Then $\bigcup_{i\in \mathbb{N}\cup\{0\}} G_i$ is a Gr\"obner basis of $I$.   
\end{lemma}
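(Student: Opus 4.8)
The plan is to verify directly the two defining conditions of a Gr\"obner basis of $I$ with respect to the fixed monomial order $\prec$: that the set $G:=\bigcup_{i}G_i$ is contained in $I$, and that the leading terms of the elements of $G$ generate the leading-term ideal $\mathrm{LT}(I)=\langle \mathrm{LT}(f):f\in I\setminus\{0\}\rangle$. The first condition is immediate, since each $G_i$ is a Gr\"obner basis of $I_i$, hence $G_i\subseteq I_i$, and $I_i\subseteq I$ by construction, so $G\subseteq I$. The entire content is therefore to establish the identity $\mathrm{LT}(I)=\sum_{i}\mathrm{LT}(I_i)$: because the Gr\"obner basis property of each $G_i$ gives $\mathrm{LT}(I_i)=\langle \mathrm{LT}(g):g\in G_i\rangle$, summing over $i$ rewrites the right-hand side as $\langle \mathrm{LT}(g):g\in G\rangle$, which is exactly the generation condition we need.

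The inclusion $\sum_{i}\mathrm{LT}(I_i)\subseteq \mathrm{LT}(I)$ is again immediate from $I_i\subseteq I$. For the reverse inclusion I would first reduce to \emph{homogeneous} elements. This is where the hypothesis that $I$ is homogeneous is used: for any $f\in I$ every homogeneous component of $f$ again lies in $I$, and since distinct homogeneous components have pairwise disjoint monomial supports, the $\prec$-largest monomial of $f$ is the leading monomial of exactly one of these components. Hence $\mathrm{LT}(f)=\mathrm{LT}(f_d)$ for the homogeneous component $f_d$ of some degree $d$, and consequently $\mathrm{LT}(I)$ is already generated by the leading terms of the homogeneous elements of $I$.

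It then remains to observe that a homogeneous element of $I$ of degree $i$ is, by the very definition of $I_i$, one of its generators, and so lies in $I_i$; therefore its leading term lies in $\mathrm{LT}(I_i)$. Applying this to the component $f_d$ above (with $i=d$) shows that every generator $\mathrm{LT}(f)$ of $\mathrm{LT}(I)$ lies in $\sum_{i}\mathrm{LT}(I_i)$, which yields $\mathrm{LT}(I)\subseteq\sum_{i}\mathrm{LT}(I_i)$ and completes the identity, hence the proof.

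I expect the only delicate point to be the reduction to homogeneous elements, i.e.\ the claim $\mathrm{LT}(f)=\mathrm{LT}(f_d)$ for a suitable component. It is worth emphasizing that this step uses nothing about compatibility of $\prec$ with the grading; it rests solely on the monomial supports of the homogeneous components being disjoint, so that the overall leading monomial is inherited from a single component. What it does genuinely require is that $I$ be homogeneous, for otherwise the components $f_d$ need not belong to $I$ and the argument breaks down.
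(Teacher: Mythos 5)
Your proof is correct and takes essentially the same route as the paper's: both hinge on the identity $\mathrm{LT}(I)=\sum_i \mathrm{LT}(I_i)$ (the paper writes $in_{\prec}(I)=\sum_i in_{\prec}(I_i)$) and then conclude via the Gr\"obner basis property of each $G_i$. The only difference is that you justify in detail the key step the paper merely asserts --- that the leading term of any $f\in I$ equals the leading term of one of its homogeneous components $f_d$, which by homogeneity of $I$ lies in $I_d$ --- so your write-up is a fleshed-out version of the same argument.
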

\begin{proof}
First, recall some terminology from the theory of Gr\"obner bases. If $J$ is an ideal of $K[X_1,\dots,X_n]$, then the initial ideal of $J$ (for a given monomial ordering $\prec$) is defined as the ideal generated by the initial terms of the polynomials in $J$ (for the monomial ordering $\prec$). We write $in_{\prec}(J)$.  
A set of polynomials $B$ in $J$ is a Gr\"obner basis of $J$ (for the monomial ordering $\prec$) if $in_{\prec}(B)=in_{\prec}(J)$, that is, if the leading terms of the elements in $B$ generates the ideal generated by the leading terms of the elements in $J$. 

Now, any leading term of a polynomial in $I$ is a leading term of some polynomial in some $I_i$. Therefore $in_{\prec}(I)\subseteq \sum_i in_{\prec}(I_i)$. Clearly $in_{\prec}(I)\supseteq \sum_i in_{\prec}(I_i)$, implying that $in_{\prec}(I)=\sum_i in_{\prec}(I_i)$.  
Therefore, if $H_i$ is a basis of $in_{\prec}(I_i)$ for all $i\in \mathbb{N}\cup\{0\}$, then $H=\bigcup_{i\in \mathbb{N}\cup\{0\}} H_i$ is a basis of $in_{\prec}(I)$. 

If $G_i$ is a Gr\"obner basis of $I_i$ for all $i\in \mathbb{N}\cup\{0\}$, then the initial terms of $G_i$ is a basis of $in_{\prec}(I_i)$ (that is, $in_{\prec}(G_i)=in_{\prec}(I_i)$). Consequently, the union over $i$ of the initial terms of $G_i$ forms a basis of $in_{\prec}(I)$, that is, $\bigcup_i G_i$ is a Gr\"obner basis of $I$. 
\end{proof}


\begin{corollary}
Let $G$ be a homogenous Gr\"obner basis of an ideal $I$. Let $L\subseteq G$ be the set of linear polynomials  of $G$ (possibly empty) and let $M$ be a set of linear polynomials. Let $H$ be a Gr\"obner basis of $\langle L\cup M\rangle$, then $G\cup H$ is a Gr\"obner basis of $\langle I\cup M\rangle$.
\end{corollary}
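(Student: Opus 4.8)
The plan is to compare initial ideals. Writing $J=\langle I\cup M\rangle$, note first that $G\cup H\subseteq J$: indeed $G\subseteq I\subseteq J$, while $H\subseteq\langle L\cup M\rangle\subseteq J$ because $L\subseteq G\subseteq I$ and $M\subseteq J$. Since $\langle G\rangle=I$ and $\langle H\rangle=\langle L\cup M\rangle$, one also gets $\langle G\cup H\rangle=I+\langle L\cup M\rangle=\langle I\cup M\rangle=J$, so $G\cup H$ generates $J$. By the definition of a Gr\"obner basis, a subset $B\subseteq J$ is a Gr\"obner basis exactly when $in_{\prec}(B)$ generates $in_{\prec}(J)$; since $G$ and $H$ are themselves Gr\"obner bases, it therefore suffices to establish the single identity
\[
in_{\prec}(J)=in_{\prec}(I)+in_{\prec}(\langle L\cup M\rangle).
\]
One inclusion is free: $I\subseteq J$ and $\langle L\cup M\rangle\subseteq J$ give $\supseteq$. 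The content is the reverse inclusion, and the preceding Lemma suggests organising it by degree, since $J$ is homogeneous.

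Next I would settle the degree-one part, where everything works cleanly. The key observation is that $L$ already spans every linear form of $I$: given a linear $\ell\in I$, reduce it modulo the homogeneous Gr\"obner basis $G$; each reduction step must use a $g\in G$ whose leading monomial divides a degree-one monomial, hence a linear $g\in L$, so $\ell$ is a $K$-linear combination of elements of $L$. Consequently the linear forms of $J$ are exactly the linear span of $L\cup M$, the degree-one component ideal $J_1$ of $J$ equals $\langle L\cup M\rangle$, and $H$ is by hypothesis a Gr\"obner basis of it. This is the sense in which adding $M$ perturbs $I$ only through its linear part $L$.

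The main obstacle is the higher-degree interaction, and this is where I expect the delicate point — in fact the decisive point — to lie. Decomposing a homogeneous $f\in J$ as $f=p+q$ with $p\in I$ and $q\in\langle M\rangle$ of the same degree, the leading term of $f$ equals $in_{\prec}(p)\in in_{\prec}(I)$ or $in_{\prec}(q)\in in_{\prec}(\langle L\cup M\rangle)$ whenever the two do not cancel. The difficulty is precisely the cancelling case: if $in_{\prec}(p)$ and $in_{\prec}(q)$ coincide and annihilate one another, then $in_{\prec}(f)$ can be a genuinely new monomial, and in Buchberger's language this is nothing but a cross $S$-polynomial between a nonlinear element of $G$ and a generator coming from $M$. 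Such an $S$-polynomial need not reduce to zero modulo $G\cup H$ under the stated hypotheses; for example, with $I=\langle xy\rangle$ and $M=\{x+y\}$ one produces the new leading term $y^{2}\notin in_{\prec}(I)+in_{\prec}(\langle L\cup M\rangle)$, so the displayed identity, and with it the corollary, can fail at this level of generality. A correct argument therefore seems to need an additional assumption forcing these cross $S$-polynomials to reduce to zero — equivalently, that the higher-degree components $J_{i}$ are already controlled by $G$ — and the natural way to finish is to isolate that condition and verify it for the code ideals actually in play, where the quadratic Pl\"ucker relations and the linear Schubert equations are expected to interact trivially.
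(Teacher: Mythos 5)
Your submission is not a proof of the corollary but a refutation of it, and the refutation is correct. In $K[x,y]$ take $I=\langle xy\rangle$ with homogeneous Gr\"obner basis $G=\{xy\}$, so that $L=\emptyset$; take $M=\{x+y\}$, so that $H=\{x+y\}$ is a Gr\"obner basis of $\langle L\cup M\rangle$. Then $\langle I\cup M\rangle=\langle xy,\,x+y\rangle=\langle x+y,\,y^{2}\rangle$, and for any monomial order with $x\succ y$ the element $y^{2}=y(x+y)-xy$ has leading term $y^{2}\notin\langle in_{\prec}(xy),\,in_{\prec}(x+y)\rangle=\langle x\rangle$; hence $G\cup H$ is not a Gr\"obner basis of $\langle I\cup M\rangle$ (if $y\succ x$, the same computation produces the stray leading term $x^{2}$). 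Your positive partial results are also accurate: since the elements of $G$ are homogeneous, every linear form of $I$ lies in the $K$-span of $L$, so $\langle L\cup M\rangle$ is exactly the ideal generated by the degree-one homogeneous elements of $J=\langle I\cup M\rangle$, and the failure occurs only in degree at least $2$, precisely at the cross $S$-polynomials between nonlinear elements of $G$ and elements of $M$, as you say.

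Comparing with the paper: the corollary is stated there without its own proof, as an immediate consequence of the preceding lemma. The intended derivation must be to apply that lemma to the homogeneous ideal $J$, letting $H$ account for $J_{1}$ and $G$ for the remaining degrees. But the lemma requires a Gr\"obner basis of $J_{i}$, the ideal generated by \emph{all} homogeneous degree-$i$ elements of $J$, for every $i$; and for $i\geq 2$ the ideal $J_{i}$ generally contains elements (such as $y^{2}=y(x+y)-xy\in J_{2}$ above) whose leading terms lie outside $in_{\prec}(G)+in_{\prec}(H)$, because degree-$i$ elements of $J$ include multiples of elements of $M$ by forms of degree $i-1$, not only elements of $I$. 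So the gap you identified is a genuine gap in the paper, not in your argument: as stated, the corollary is false, and consequently the proof of the complexity theorem for Algorithm \ref{alg:4b}, which invokes exactly this reduction of Gr\"obner basis computation to Gaussian elimination, is incomplete as it stands. Your proposed repair --- isolating a hypothesis that forces the cross $S$-polynomials of $G$ and $M$ to reduce to zero, and then verifying it for the Pl\"ucker ideal together with the linear Schubert equations actually used in decoding --- is the right direction for fixing both statements.
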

Note that the Gr\"obner basis of an ideal generated by linear polynomials can be calculated using Gaussian elimination. 

\begin{algorithm}[H]
\caption{Let $\mathcal{U}\subseteq PG\left(\binom{n+1}{k+1}-1,\mathbb{F}\right)$ be an algebraic variety defining a geometric subspace code $\mathcal{C}=\mathcal{V}\cap \mathcal{U}$ of constant (projective) dimension $k$ in $PG(n,\mathbb{F})$. Assume that $\mathcal{C}$ is given in terms of the reduced Gr\"obner basis of its polynomial ideal. 
Given a subspace $x\subseteq PG(n,\mathbb{F})$, this algorithm calculates the subspaces $c\in \mathcal{C}$ with $d(x,c)\leq t$. }
\label{alg:4b}
\begin{algorithmic} 
\State Use Algorithm \ref{alg:impl2} or Algorithm \ref{alg:impl} to calculate the equations defining $\mathcal{W}(F(x,(k-t+\dim(x))/2,k))$. 
\State Calculate $c=\mathcal{C} \cap \mathcal{W}(F(x,(k-t+\dim(x))/2,k))$ by solving the joint system of equations(, using Gaussian elimination). 
\end{algorithmic}
\end{algorithm}

\begin{theorem}
Algorithm \ref{alg:4b} decodes the subspace $x\subseteq PG(2k+1,\mathbb{F})$ to the subspaces $c\in \mathcal{C}$ with $d(x,c)\leq t$. If $\mathcal{C}$ is given in the form of a reduced Gr\"obner basis, then the order of complexity of Algorithm \ref{alg:4b} equals the order of complexity of solving a system of linear equations in $\binom{n+1}{k+1}$ variables. 
\end{theorem}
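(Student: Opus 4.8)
The plan is to prove the two assertions separately: correctness by chaining the structural results already established, and complexity by invoking the corollary that reduces the online work to Gaussian elimination.

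For correctness I would fix $\delta=(k-t+\dim(x))/2$ and argue as follows. By the distance formula $d(x,c)=\dim(c)+\dim(x)-2\dim(c\cap x)$ recorded in the proof of the lemma bounding $\dim(c\cap x)$, and since $\dim(c)=k$, the relation $d(x,c)\le t$ is not merely implied by but is \emph{equivalent} to $\dim(c\cap x)\ge\delta$: the converse follows because $\dim(c\cap x)\ge\delta$ gives $d(x,c)\le k+\dim(x)-2\delta=t$. Hence the codewords to be returned are exactly those $c\in\mathcal{C}$ with $\dim(c\cap x)\ge\delta$, which by the lemma identifying such codewords with a Schubert intersection equals $\mathcal{C}\cap\Omega(F(x,\delta,k))$. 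By Lemma~\ref{lemma:schubertvariety}, $\Omega(F(x,\delta,k))=\mathcal{W}(F(x,\delta,k))\cap\mathcal{V}$, and since $\mathcal{C}=\mathcal{V}\cap\mathcal{U}\subseteq\mathcal{V}$ the extra intersection with $\mathcal{V}$ is idempotent, so
$$\mathcal{C}\cap\Omega(F(x,\delta,k))=\mathcal{C}\cap\mathcal{W}(F(x,\delta,k)).$$
This is precisely the set Algorithm~\ref{alg:4b} computes as the solution of the joint system formed by the equations of $\mathcal{C}$ and the linear equations of $\mathcal{W}(F(x,\delta,k))$ produced by Algorithm~\ref{alg:impl2} or Algorithm~\ref{alg:impl}; the case $\dim(x)>k$ is reduced to $\dim(x)\le k$ through dual Pl\"ucker coordinates by Lemma~\ref{lemma:2a}, which preserves intersection dimensions and hence the output.

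For the complexity the central point is that the only potentially expensive online step, solving the joint polynomial system, collapses to linear algebra. Writing $I$ for the homogeneous ideal of $\mathcal{C}$ and $G$ for its precomputed reduced Gr\"obner basis, decoding amounts to describing $\langle I\cup M\rangle$, where $M$ is the set of linear equations defining $\mathcal{W}(F(x,\delta,k))$. By the corollary, a Gr\"obner basis of $\langle I\cup M\rangle$ is $G\cup H$, with $H$ a Gr\"obner basis of the ideal generated by $M$ together with the linear polynomials $L\subseteq G$. Since $L\cup M$ consists solely of linear polynomials, $H$ is obtained by Gaussian elimination on a linear system in the $\binom{n+1}{k+1}$ Pl\"ucker variables, and because $G$ is known in advance no genuine (worst-case exponential) Gr\"obner basis computation is incurred at decode time. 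It then remains to check that no auxiliary step exceeds this order: building $M$ by Algorithm~\ref{alg:impl2} or Algorithm~\ref{alg:impl} has the polynomial complexity recorded in the preceding complexity lemmas, of order at most that of the linear solve, while extracting the solution of $\mathcal{C}\cap\mathcal{W}(F(x,\delta,k))$ — a single projective point, since the code corrects $t$ errors — from the reduced form of $G\cup H$ is of negligible cost. Collecting these estimates yields that the overall order of complexity equals that of solving a system of linear equations in $\binom{n+1}{k+1}$ variables.

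The correctness chaining is purely formal, so I expect the genuine obstacle to lie in the complexity claim, specifically in justifying that \emph{only} Gaussian elimination is required. This rests entirely on the corollary, and hence on the twin facts that the newly imposed equations $M$ are linear and that $I$ admits a homogeneous Gr\"obner basis $G$ from which the linear part $L$ can be isolated; verifying that these hypotheses hold for the variety at hand is the crux. The secondary difficulty is the bookkeeping confirming that assembling $M$ and reading off the solution are of order no greater than the linear solve, which is routine given the earlier complexity lemmas but requires the dimension estimates from Lemma~\ref{lemma:dim} and Lemma~\ref{lem:comp1} to be tracked carefully.
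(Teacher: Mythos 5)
Your proposal is correct and follows essentially the same route as the paper: the complexity claim rests, exactly as in the paper's proof, on the homogeneity of the code ideal and the corollary that adjoining the linear equations of $\mathcal{W}(F(x,(k-t+\dim(x))/2,k))$ to a precomputed homogeneous Gr\"obner basis reduces to Gaussian elimination in $\binom{n+1}{k+1}$ variables, with the auxiliary steps bounded by the earlier complexity lemmas. The only difference is that you spell out the correctness chaining (the equivalence $d(x,c)\leq t \Leftrightarrow \dim(c\cap x)\geq \delta$ and the identity $\mathcal{C}\cap\Omega=\mathcal{C}\cap\mathcal{W}$ since $\mathcal{C}\subseteq\mathcal{V}$), which the paper leaves implicit in the lemmas of Section \ref{sec:explicit}.
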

\begin{proof} 
The first step has complexity less than $O\left(\binom{n+1}{k+1}^2\right)$. 
In the last step the intersection $\mathcal{C}\cap \mathcal{W}_{\Omega}$ is calculated as the solutions to the joint system of polynomial equations. 
The polynomials defining $\mathcal{C}$ are the quadratic polynomials defining $\mathcal{V}$ and the polynomials defining the variety $\mathcal{U}$. They are all homogeneous polynomials, and therefore any Gr\"obner basis also consist of homogeneous polynomials. 
The algorithm for adding the linear equations defining $\mathcal{V}$ to this basis reduces to Gaussian elimination. 
Therefore the overall complexity of Algorithm \ref{alg:4b} is the complexity of Gaussian elimination of a system of linear equations in $\binom{n+1}{k+1}$ variables, that is, smaller than $O\left(\binom{n+1}{k+1}^3\right)$.    
\end{proof}

The third version of the algorithm uses parametrizations of both $\mathcal{W}(F(x,\delta,k))$ and $\mathcal{C}$. It is as efficient as Algorithm \ref{alg:4b}, but this efficiency requires a good parametrization of $\mathcal{C}$. 
\begin{algorithm}[H]
\caption{Let $\mathcal{U}\subseteq PG\left(\binom{n+1}{k+1}-1,\mathbb{F}\right)$ be an algebraic variety defining a subspace code $\mathcal{C}=\mathcal{V}\cap \mathcal{U}$ of constant (projective) dimension $k$ in $PG(n,\mathbb{F})$. Assume that $\mathcal{C}$ is given in terms of a local polynomial parametrization $q(\beta)$. Assume that the  polynomials of $q(\beta)$ form a reduced Gr\"obner basis.  
Given a subspace $x\subseteq PG(n,\mathbb{F})$, this algorithm calculates the subspaces $c\in \mathcal{C}$ with $d(x,c)\leq t$. }
\label{alg:4c}
\begin{algorithmic} 
\State Use Algorithm \ref{alg:coord3} to calculate a parametrization $p(\alpha)$ of $\mathcal{W}(F(x,(k-t+\dim(x))/2,k))$. 
\State Solve the system of equations obtained by setting $p(\alpha)=q(\beta)$ (or $q(\beta) - p(\alpha)=0$). Note that if $q(\beta)$ is written as a Gr\"obner basis, only Gaussian elimination of $p(\alpha)$ (with respect to $q(\beta)$) is needed to solve this system. 
\State Return the solution: $\mathcal{C} \cap \mathcal{W}(F(x,(k-t+\dim(x))/2,k))$. 
\end{algorithmic}
\end{algorithm}


Algorithm \ref{alg:4a}, Algorithm \ref{alg:4b} and Algorithm \ref{alg:4c} can be applied to any subspace code which is an algebraic variety $\mathcal{U}\cap\mathcal{V}$ in the Pl\"ucker embedding of the Grassmannian, defined by a set of polynomials, that is, to any geometric subspace code.

\section{Decoding Desarguesian $t$-spreads}
\label{section:4}
A $t$-spread in $G_{\mathbb{F}}(t+1,2t+2)$ can correct $t$ errors. In particular, the spread of lines in $G_{\mathbb{F}}(2,4)$ from Section \ref{section:2} can correct one error. 
Here we will see how the decoding algorithm presented there generalizes to Desarguesian spreads in higher dimensions. 
We will use the decoding algorithm described in Section \ref{sec:decoder}. 
The algorithm is designed to decode any subspace code whose Pl\"ucker coordinates is an algebraic variety of the Pl\"ucker embedding of the Grassmannian. 
It is therefore enough to show that a Desarguesian $t$-spread is an algebraic variety in the Pl\"ucker embedding of the Grassmannian. 

In Section \ref{section:2}, we described how a Desarguesian line spread  is represented in the Pl\"ucker embedding of the Grassmannian $G_{\mathbb{F}}(2,4)$ as the complete intersection with a linear subspace. This generalizes to Desarguesian $t$-spreads in $PG(rt-1,\mathbb{F})$. 
The Desarguesian spreads are the spreads isomorphic to the classical spreads, see \cite{Segre}. 
\begin{theorem}
\label{lemma:spreadrep}
\cite{Segre, Lunardon,Lunardon2}
The Pl\"ucker coordinates of the subspaces of a Desarguesian $t$-spread in $PG(rt-1,\mathbb{F})$ are the points in the complete intersection of the Pl\"ucker embedding $\mathcal{V}$ of the Grassmannian and a projective subspace $\mathcal{U}$ of dimension $r^t-1$.  
\end{theorem}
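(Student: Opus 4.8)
The plan is to reduce to the field-reduction model of a Desarguesian spread and then compute the Pl\"ucker image explicitly over a splitting field. By the results of Segre and Lunardon cited in the statement, every Desarguesian $t$-spread of $PG(rt-1,\mathbb{F})$ is projectively equivalent to the spread obtained by field reduction, so I may take $\mathbb{F}=\mathbb{F}_q$, set $\mathbb{K}=\mathbb{F}_{q^t}$, and regard $V=\mathbb{K}^r$ as an $\mathbb{F}$-vector space of dimension $rt$. The spread $\mathcal{S}$ then consists of the one-dimensional $\mathbb{K}$-subspaces $\left\langle w\right\rangle_{\mathbb{K}}$, each a $t$-dimensional $\mathbb{F}$-subspace, and these partition $PG(rt-1,\mathbb{F})$. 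Fixing $\xi$ with $\mathbb{K}=\mathbb{F}(\xi)$, the subspace $\left\langle w\right\rangle_{\mathbb{K}}$ has $\mathbb{F}$-basis $w,\xi w,\dots,\xi^{t-1}w$, so its primary Pl\"ucker coordinate is $P(w)=w\wedge \xi w\wedge\cdots\wedge\xi^{t-1}w\in\bigwedge^{t}V$, and the goal is to identify the $\mathbb{F}$-span $\mathcal{U}$ of $\{P(w)\}$ and to show $\mathcal{V}\cap\mathcal{U}=\mathcal{S}$.

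The key computation is to diagonalize the $\mathbb{F}$-linear map $\phi$ of multiplication by $\xi$. Extending scalars to $\mathbb{K}$ (the extension being Galois), $V\otimes_{\mathbb{F}}\mathbb{K}$ splits as $\bigoplus_{s=0}^{t-1}V_s$, where $V_s$ is the $\phi$-eigenspace for the conjugate $\xi^{q^s}$, of $\mathbb{K}$-dimension $r$, and the component of $w$ in $V_s$ may be identified with its Frobenius twist $w^{(s)}$. Writing $\xi^{j}w=\sum_s (\xi^{q^s})^{j}w^{(s)}$ and expanding $P(w)$, every term in which two factors come from the same eigenspace has a repeated vector and vanishes; only permutations survive, and collecting them gives $P(w)=\det\!\left((\xi^{q^s})^{j}\right)\cdot w^{(0)}\wedge\cdots\wedge w^{(t-1)}$. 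The determinant is a nonzero Vandermonde, so $P(w)$ lies, up to a fixed nonzero scalar, in the single graded piece $V_0\otimes\cdots\otimes V_{t-1}\subseteq\bigwedge^{t}(V\otimes\mathbb{K})$ and equals the decomposable tensor $w^{(0)}\otimes\cdots\otimes w^{(t-1)}$; this exhibits the composite of field reduction with the Pl\"ucker map as a Frobenius-twisted Segre embedding. To pin down $\dim\mathcal{U}$ I would read off the coordinates of $w^{(0)}\otimes\cdots\otimes w^{(t-1)}$ as the twisted monomials $w_{i_0}w_{i_1}^{q}\cdots w_{i_{t-1}}^{q^{t-1}}$. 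Since every exponent lies in $\{0,\dots,q^{t}-1\}$ and distinct multi-indices give distinct exponent vectors (uniqueness of base-$q$ expansions), these $r^{t}$ reduced monomials are linearly independent as functions on $\mathbb{K}^{r}$. Hence the image spans $V_0\otimes\cdots\otimes V_{t-1}$, so $\dim_{\mathbb{F}}\mathcal{U}=r^{t}$ and $\mathcal{U}$ has projective dimension $r^{t}-1$, as required.

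It remains to show that $\mathcal{V}\cap\mathcal{U}$ contains no points beyond $\mathcal{S}$, that is, every $\mathbb{F}$-rational totally decomposable vector in $\mathcal{U}$ represents a spread element. Given such an $\eta$ with associated $t$-dimensional $\mathbb{F}$-subspace $U_\eta$, the condition $\eta\otimes 1\in V_0\otimes\cdots\otimes V_{t-1}$ forces $U_\eta\otimes\mathbb{K}$ to be the direct sum of its intersections with the eigenspaces $V_s$, each intersection one-dimensional; equivalently $U_\eta\otimes\mathbb{K}$ is $\phi$-invariant, and as $\eta$ is $\mathbb{F}$-rational so is $U_\eta$, making $U_\eta$ a $\phi$-invariant $\mathbb{F}$-subspace of $V$. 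Since $\xi$ generates $\mathbb{K}$ over $\mathbb{F}$, $\phi$-invariance coincides with $\mathbb{K}$-invariance, so $U_\eta$ is a $\mathbb{K}$-subspace of $\mathbb{K}^r$ of $\mathbb{F}$-dimension $t$, hence a $\mathbb{K}$-line, hence a member of $\mathcal{S}$. I expect this final step to be the main obstacle, for two reasons: one must carefully justify that membership in the single top graded piece is equivalent to $\phi$-invariance of the underlying subspace (a statement about the multidegree support of a decomposable vector under the eigenspace grading), and one must keep the finite-field arithmetic honest, since over $\overline{\mathbb{F}}$ the intersection $\mathcal{V}\cap\mathcal{U}$ is the full Segre variety $(\mathbb{P}^{r-1})^{t}$ of such diagonal subspaces, and it is precisely Galois descent (the Frobenius cyclically permuting the $V_s$) that cuts this down to the $\mathbb{F}$-rational spread. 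Here ``complete intersection'' is read as the entire point set $\mathcal{V}\cap\mathcal{U}$, so that combining the two inclusions yields $\mathcal{S}=\mathcal{V}\cap\mathcal{U}$ as claimed.
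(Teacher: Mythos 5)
The paper does not actually prove this theorem: it is stated as a known result and attributed to \cite{Segre, Lunardon, Lunardon2}, so there is no internal proof to compare against. What you have written is, in substance, a reconstruction of the argument living in that cited literature (and in the paper's reference \cite{Pepe} on the variety $V_{rt}$): field reduction, diagonalization of multiplication by $\xi$ over the splitting field, identification of the composite of field reduction and the Pl\"ucker map as a Frobenius-twisted Segre embedding into the multidegree-$(1,\dots,1)$ piece $V_0\otimes\cdots\otimes V_{t-1}$, linear independence of the $r^t$ twisted monomials to pin down $\dim\mathcal{U}=r^t$, and descent for the reverse inclusion. I find the argument correct, and it is the right proof to give.

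Two remarks. First, the lemma you flag as the main obstacle --- that a decomposable vector lying in the multidegree-$(1,\dots,1)$ graded piece has underlying subspace splitting as a direct sum of lines, one in each $V_s$ --- is true and has a short proof you could have included: since $U_\eta\otimes\mathbb{K}=\{v: v\wedge(\eta\otimes 1)=0\}$, write any such $v$ as $\sum_s v_s$ with $v_s\in V_s$; the vectors $v_s\wedge(\eta\otimes 1)$ lie in distinct graded pieces of $\bigwedge^{t+1}(V\otimes\mathbb{K})$, so each vanishes separately, giving $v_s\in U_\eta\otimes\mathbb{K}$ for every $s$; hence $U_\eta\otimes\mathbb{K}=\bigoplus_s\bigl((U_\eta\otimes\mathbb{K})\cap V_s\bigr)$, and the multidegree of $\eta$ forces every summand to be exactly one-dimensional. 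Second, your reduction to $\mathbb{F}=\mathbb{F}_q$ silently uses that the degree-$t$ extension is Galois with cyclic group generated by the Frobenius; the theorem as printed allows arbitrary $\mathbb{F}$, where the eigenspace decomposition requires at least separability. Since the paper's setting and the cited sources concern finite fields, this restriction is harmless, but it deserves an explicit sentence.
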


A \emph{cap} of a projective space is a set of points such that no three of them are collinear. The set of points of $\mathcal{V}\cap\mathcal{U}$ forms a cap of $\mathcal{U}\sim PG(r^t-1,\mathbb{F})$ \cite{Lunardon}. 
Moreover, if $r=2$, then any $t+1$ points of  $\mathcal{V}\cap\mathcal{U}$ are in general position \cite{Pepe}.  

Algorithm \ref{alg:4a} from Section \ref{sec:decoder} is designed for subspace codes in $G_{\mathbb{F}}(k+1,2k+2)$. 
The elements of such a code are situated in the middle of the subspace lattice of $PG(2k+1,\mathbb{F})$. This implies that error-correction dualizes nicely. 
The $t$-spreads are the largest codes with the maximum minimum distance of all codes in $G_{\mathbb{F}}(t+1,2t+2)$. Therefore $t$-spreads in $PG(2t+1,\mathbb{F})$ are good codes.  
 
Lemma \ref{lemma:spreadrep} gives a representation of Desarguesian $t$-spreads in the Pl\"ucker embedding of the Grassmannian as an algebraic variety, the section of the Pl\"ucker embedding of the Grassmannian and a projective subspace $\mathcal{U}\subseteq PG\left(\binom{2t+2}{t+1}-1,\mathbb{F}\right)$. 
This representation makes it possible to decode Desarguesian $t$-spread codes using Algorithm \ref{alg:4a}, Algorithm \ref{alg:4b} and Algorithm \ref{alg:4c}. 
Note that such spreads have relatively small decoding complexity with Algorithm \ref{alg:4a}, since the equations defining them as an algebraic variety in the Grassmannian are linear. In Algorithm \ref{alg:4b} and Algorithm \ref{alg:4c}, the equations and the parametrization of the code are precalculated and presolved, respectively, so here the linearity of the code is less important. 
%
%
%
For $t=1$, Algorithm \ref{alg:4a} applied to spread codes becomes the algorithm for Desarguesian line spread codes in $PG(3,\mathbb{F})$ from Section \ref{section:2}. 

Another family of subspace codes which are defined as the intersection of the Pl\"ucker embedding of the Grassmannian with a projective subspace (that is, a linear variety) are the lifted Gabidulin codes. Note that the Desarguesian spread codes are actually an example of lifted Gabidulin codes. 



\section*{Conclusions}

The subject of this article was a decoding algorithm of geometric subspace codes, that is, constant dimension subspace codes of vector space dimension $k+1$ in a vector space of dimension $n+1$, which are defined in $PG\left(\binom{n+1}{k+1}-1,\mathbb{F}\right)$ as the points in the intersection of an algebraic variety and the Pl\"ucker embedding of the Grassmannian $G_{\mathbb{F}}(k+1,n+1)$. 
The complexity of some versions of this algorithm was shown to be the complexity of solving a system of linear equations in $\binom{n+1}{k+1}$ variables. 
This was due to the observation that the equations defining the code inside the Pl\"ucker embedding of the Grassmannian can be solved once and for all in the construction of the code. 
Therefore the complexity of decoding a received subspace $x$ is reduced to solving the linear equations defining the Schubert variety consisting of the set of subspaces in $G_{\mathbb{F}}(k+1,n+1)$ which are located within a certain distance from $x$. 

It was first thought that the complexity of decoding geometric subspace codes depends highly on the degree of the polynomials defining the code in the Grassmannian. 
However the results presented in this paper show that this is not the case. 
The decoding algorithm can be designed to have the order of complexity of solving a system of linear equations in $\binom{n+1}{k+1}$ variables. This is Algorithm \ref{alg:4b} and \ref{alg:4c}. 
For codes which can correct a single error, the decoding complexity can be reduced so that it has the order of complexity of solving a system of linear equations in $n+1$ variables, since it is then not necessary to use the Pl\"ucker coordinates. For this purpose we have used Algorithm \ref{alg:4a}. Section \ref{section:2} gives a thorough example on how to realize this idea in practice and shows an interesting link to the Klein correspondence.  

We have applied the geometric decoding algorithm to spread codes, giving an example of how finite geometry is an important tool in the construction of geometric subspace codes. 
Desarguesian spread codes are defined as the intersection of a linear variety and the Pl\"ucker embedding of the Grassmannian. 
They make an example from an entire family of subspace codes defined in the same way: the lifted Gabidulin codes, or more generally, lifted linear rank-metric codes (see also \cite{Rosenthal2}).

\section*{Acknowledgments}
The author would like to thank Axel Hultman for many useful discussions during the elaboration of this article. 
Partial financial support from the Spanish MEC project ICWT (TIN2012-32757) is acknowledged.

\end{document}